\documentclass[12pt]{article}
\usepackage{caption}
\usepackage{authblk}
\usepackage{geometry}
\usepackage{amsmath}
\usepackage{amssymb}
\usepackage{amsthm}
\usepackage{mathrsfs}
\usepackage{customcommands}
\usepackage{bm}
\usepackage{Nettastyle}
\usepackage{dsfont}
\usepackage{comment}
\usepackage[utf8]{inputenc}
\usepackage{calc}
\usepackage{accents}
\usepackage{amsfonts}

\usepackage{braket}
\usepackage[normalem]{ulem}
\usepackage{color}

\usepackage{ulem}
\usepackage{mathtools}
\usepackage{tensor} 
\usepackage{tikz}
\usetikzlibrary{arrows, positioning, quotes, intersections}
\usepackage{subcaption}
\usepackage{graphicx}  

\usepackage{thmtools, thm-restate}

\preprint{MIT-CTP/6035}
\title{A Semiclassical Diagnostic for Spacetime Emergence}

\author[1]{Netta Engelhardt}
\author[1,2]{and Elliott Gesteau}

\affiliation[1]{Center for Theoretical Physics -- a Leinweber Institute, Massachusetts Institute of Technology, \\Cambridge, MA 02139, USA}
\affiliation[2]{Center of Mathematical Sciences and Applications, Harvard University, Cambridge MA, 02138, USA}
\emailAdd{engeln@mit.edu}
\emailAdd{egesteau@mit.edu}

\abstract{Recent developments have shown that some semiclassical spacetimes cannot emerge from a traditional application of the rules of holography, prompting proposals for restoring their emergence with ``observer rules". In this paper, we propose a general semiclassical diagnostic of such failures of emergence, and of the extent to which observer rules can fix them. Our diagnostic is the presence of certain ``evanescent" quantum extremal surfaces, which are distinguished by an upper bound on their area rather than their generalized entropy. In particular, the generalized entropy of an evanescent QES may be large: even though its area term must be small, its bulk entanglement term is unconstrained. This feature is explained by an operational distinction between classical and quantum connectivity in semiclassical gravity, or equivalently between the two summands of the generalized entropy.}

\begin{document}

\maketitle

\section{Introduction}
\label{sec:intro}

A basic premise of the holographic paradigm is that in the appropriate regime, a gravitational effective field theory in the bulk serves as a good approximation to the fundamental description -- e.g. the CFT. Mathematically, the map between the effective and fundamental descriptions is usually modeled by a quantum error correcting code (see~\cite{AlmDon14, Har16, HaPPY, HayNez16, AkeEng22} among others):
\begin{equation}
    V:{\cal H}_{\rm code}\rightarrow {\cal H}_{\rm phys},
\end{equation}
where ${\cal H}_{\rm code}$ is identified with the bulk EFT Hilbert space with a cutoff well above the Planck scale, and ${\cal H}_{\rm phys}$ is the Hilbert space of the fundamental description. In general, little is known about the map $V$; however, the holographic dictionary provides some constraints: for example, the extrapolate dictionary \cite{BanDou98} means that the map is an isometry on the simple wedge \cite{EngWal17b, EngWal18, EngPen21a}, which always contains the causal wedge \cite{Wal12, EngWal14}. There are also additional constraints e.g. stemming from Python's lunch proposal \cite{BroGha19}.

A traditional criterion for \textit{emergence} of the effective description is the approximate preservation of inner products of all states up to exponentially small errors in $N^{2}$:\footnote{This should also be true for a fixed number of copies of the system.}
\begin{equation} \label{eq:innerprod}
    |\braket{ \phi|\psi}-\braket{\phi|V^\dagger V|\psi}|\lesssim  e^{-N^{2}},
\end{equation}
where $N^2$ is the number of degrees of freedom of the fundamental description. In \cite{AkeEng22}, this criterion was refined to only demand that~\eqref{eq:innerprod} hold only for states of subexponential complexity in $\mathcal{H}_{code}$. We will call~\eqref{eq:innerprod} for subexponentially complex states the \textit{condition of emergence}.

Recent developments have shown that known properties of the holographic map imply that some ordinary smooth spacetime geometries with standard matter content cannot satisfy the condition of emergence \eqref{eq:innerprod}. Such geometries include closed universes, both on their own and entangled with asymptotically AdS spacetimes \cite{HarUsa25, UsaWan24, UsaZha24, AntSas23, EngGes25a, Ges25, McNVaf20}, as well as the fully evaporated black hole \cite{AlmMah19a, EngGes25b, Har26}. These cases have been the launching point for several proposed modifications to the holographic dictionary \cite{HarUsa25, SahVan24, Van21, AbdAnt25, Liu25a, KudWit25, Liu25b}. These hinge -- explicitly or possibly implicitly -- on a modification of the holographic map $V$ to a new map $V_{Ob}$ that takes into account the presence of a bulk observer. The criterion for emergence is, for states of subexponential complexity in the entropy $S_{Ob}$ of the observer, the following:
\begin{equation} \label{eq:innerprodOb}
    |\braket{ \phi|\psi}-\braket{\phi|V_{Ob}^\dagger V_{Ob}|\psi}|\lesssim  \max\left(e^{-N^{2}},e^{-S_{Ob}}\right).
\end{equation}
These rules for example allow for the restoration
of emergence of the semiclassical geometry from the point of view of an observer in the closed universe and the fully evaporated black hole interior. The inner products in~\eqref{eq:innerprod} for the original map $V$ are then interpreted as only physically meaningful to an asymptotic observer of arbitrarily large entropy -- the fact that whether or not they are preserved depends on the observer under the requisite modification of the map $V$ to $V_{\rm Ob}$ is an instance of observer complementarity \cite{EngGes25b}.

The main result of this paper is a geometric bulk diagnostic of the failure of emergence of a semiclassical geometry, both with and without an observer (i.e. the failure of both ~\eqref{eq:innerprod} and \eqref{eq:innerprodOb}). In other words, we will show that a semiclassical geometry itself can tell us whether it can satisfy the conditions of emergence \eqref{eq:innerprod} and \eqref{eq:innerprodOb}. 

Under the assumption, inspired by the Python's Lunch proposal \cite{BroGha19}, that the holographic map $V$ can be decomposed in terms of a series of unitaries, possibly with postselections in between\footnote{This assumption is sufficiently general to accommodate a number of successful tensor network models in AdS/CFT (e.g. the HaPPY code~\cite{HaPPY}, non-isometric codes~\cite{AkeEng22}, the closed universe codes of~\cite{HarUsa25,EngGes25b}, etc).}:
\begin{align}
V\propto \prod_{i\in I}\bra{0_i}(U_i\otimes Id),
\end{align}
together with some additional technical assumptions stated in Sec.~\ref{sec:networks}, we find that violations of \eqref{eq:innerprod} can be detected by the presence of a particular type of quantum extremal surface (QES)~\cite{EngWal14}. This special type of QES, which we term \textit{evanescent}, is a QES $X$ satisfying the requisite homology constraint for the full fundamental description such that
\begin{equation}
\inf\limits_{\psi\in {\cal H}_{\rm code}} S_{\rm gen}[X]\leq {\cal O}(\log (1/G)).
\label{eq:infimum}
\end{equation}

Note that although the condition of emergence \eqref{eq:innerprod} must only hold for subexponentially complex states relative to a fixed state $\ket{\psi_0}\in {\cal H}_{\rm code}$ (and copies thereof), the above characterization is independent of the choice of $\ket{\psi_0}$:~\eqref{eq:infimum} is a condition on the infimum of the generalized entropy over all possible states in the code subspace, regardless of their complexity: it is thus more appropriately characterized as code subspace-dependent. Roughly speaking, a QES is evanescent if it has a small enough area, regardless of how much bulk entanglement it carries in a particular choice of state in the code subspace. In the case of a closed universe disconnected from an AdS spacetime, the evanescent QES is the empty set whose homology hypersurface separates the closed universe from the AdS spacetime, whereas in the case of the fully evaporated black hole, it is the empty QES cutting separating the fully evaporated interior from the exterior and the bath of radiation. See Fig.~\ref{fig:evanescent} for an illustration.

\begin{figure}
\centering
\includegraphics[scale=0.5]{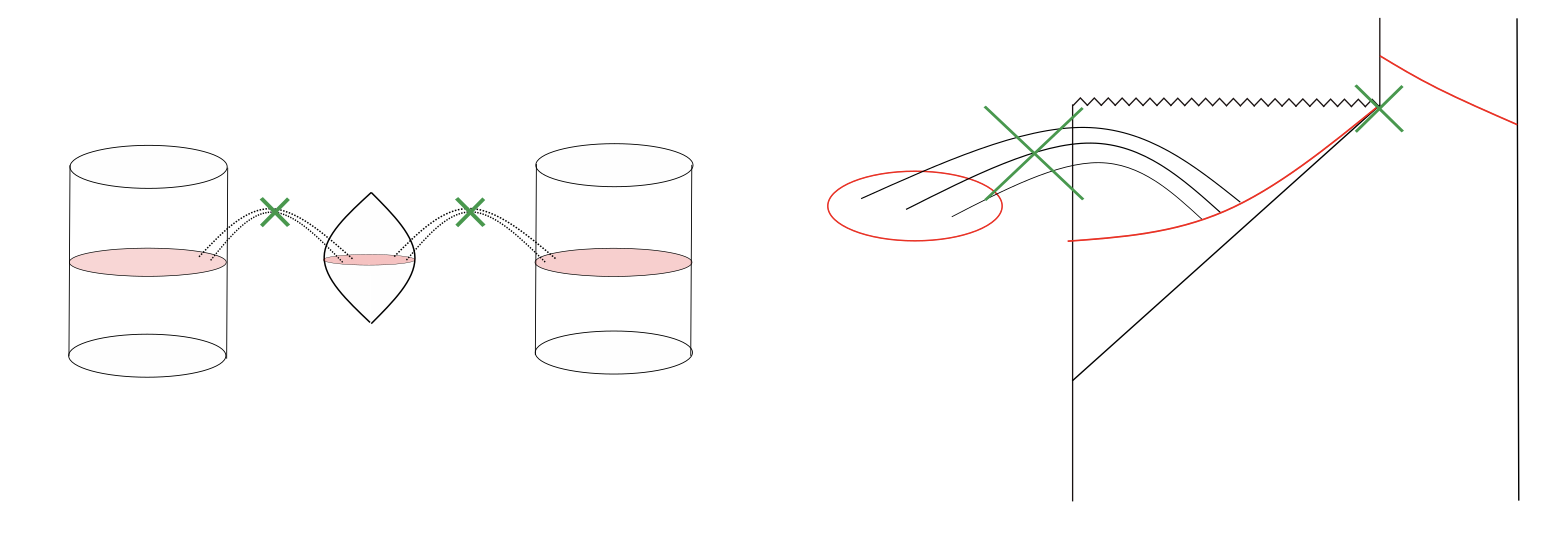}
\caption{The evanescent QES (green cross) homologous to the full fundamental description for the AS$^2$ cosmology (left) and the fully evaporated black hole (right). In the AS$^2$ cosmology the full fundamental description is the two CFTs and the evanescent QES is the union of the two empty surfaces separating the baby universe from the AdS regions. In the fully evaporated black hole the fundamental description is the union of the boundary and the bath of radiation, and the evanescent QES disconnects the interior from the union of the bath and the exterior.}
\label{fig:evanescent}
\end{figure}

The fact that our criterion \eqref{eq:infimum} depends only on the area term may seem unconventional, as the generalized entropy $S_{\rm gen}=A/4G+S_{\rm bulk}$ is widely expected to be more meaningful on its own as a potential UV complete quantity~\cite{Sor83, Sre93, SusUgl94, Kab95, DemLaf95,Jac94, LarWil95, FurSol94, Win00, BouFis15a, Ges23}. However this is unavoidable: for the fully evaporated black hole, \cite{EngGes25b} found a violation of \eqref{eq:innerprod} even though the generalized entropy of the empty surface separating the interior from the bath is very large at full evaporation. 

The reason that an area term seems less fundamental than a generalized entropy is that defining it requires a choice of UV cutoff and renormalization scheme, and the individual summands $A/4G$ and $S_{\rm bulk}$ terms depend on these choices. In the context of emergence of the effective description, however, the choice of cutoff carries a physical meaning: it tells us which subset of the degrees of freedom carrying the generalized entropy of a QES can and cannot be acted on within the EFT that we are trying to make emerge. In this context, the term $A/4G$ has a clear operational significance which accounts for its distinction over $S_{\rm bulk}$: $A/4G$ corresponds to the part of the generalized entropy that cannot be operated upon in the effective description. The physical interpretation of an evanescent QES satisfying~\eqref{eq:infimum} is that all but an $\mathcal{O}(\log N)$ number of the degrees of freedom carrying its generalized entropy can be acted upon within the effective description. We will see that this fact is what causes violations of~\eqref{eq:innerprod}. An interesting and somewhat related discussion of the area term appeared in~\cite{CaoChe26} in the context of non-local magic while this manuscript was in finishing stages. It would be interesting to investigate this connection further. 

The fact that the failure of \eqref{eq:innerprod} amounts to a condition on the area term rather than the generalized entropy has an immediate consequence for the ER=EPR paradigm \cite{Van10, Van13, MalSus13}. When an evanescent QES has a large amount of entropy in $1/G$, it does not carry much classical connectivity, but a lot of quantum connectivity in the sense of~\cite{EngLiu23}. What our arguments show is that if a bag-of-gold-type geometry carries a parametrically large amount of generalized entropy through a QES, but that all but an $\mathcal{O}(\log N)$ amount of this entropy comes from quantum connectivity, then failures of the emergence criterion \eqref{eq:innerprod} can be detected, unlike in the case where a significant portion of the connectivity is classical. This can be interpreted as a \textit{failure of ER=EPR}. Any notion of connectivity supported primarily by bulk entanglement at leading order is a mirage: on the other side there is no emergent spacetime according to criterion \eqref{eq:innerprod}.

The notion of an evanescent QES generalizes when there is an observer in the bulk to what we term an ``Ob-evanescent QES''. Roughly speaking, such a QES has an area term bounded from above by the observer's entropy $S_{Ob}$ rather than $G$. More precisely, a QES is Ob-evanescent if
\begin{equation}
\inf\limits_{\psi\in {\cal H}_{\rm code}} S_{\rm gen}[X]\leq {\cal O}(\min(\log (1/G), \log S_{Ob})).
\label{eq:infimumOb}
\end{equation}
The presence of a nontrivial Ob-evanescent QES homologous to the union of the fundamental description and the observer implies a failure of \eqref{eq:innerprodOb}. The rest of the discussion above generalizes similarly. Note that due to the potential presence of Ob-evanescent QESs, it is not always possible to have the full spacetime emerge even with an observer. For example, this is the case for a geometry containing two closed universes; see Sec.~\ref{sec:Ob} for details.

\vspace{0.5cm}

The paper is structured as follows. In Sec.~\ref{sec:networks}, we define the tensor network models inspired by the Python's Lunch that we will use to describe the holographic map on a slice of a semiclassical geometry, and then state and show our main Theorem~\ref{thm:main}, which characterizes whether spacetime emergence in the sense of \eqref{eq:innerprod} holds terms of properties of these tensor networks. We also explain how to excise a tensor network for which Eq.~\eqref{eq:innerprod} fails to restore the condition of emergence \eqref{eq:innerprod}.

In Sec.~\ref{sec:evanescent}, we define and give examples of the notion of evanescent QES, and argue that the presence of a nontrivial evanescent QES homologous to the full fundamental description signals a failure of the condition of emergence \eqref{eq:innerprod}. When such a QES is present, we also define a covariant excision protocol, which removes part of the bulk to restore the validity of \eqref{eq:innerprod} for a redefinition of ${\cal H}_{\rm code}$.

Sec.~\ref{sec:EREPR} explores in more detail the separation between the role of the area term and the bulk entropy term in the definition of an evanescent QES. In particular, we comment on the case of an evanescent QES with a large generalized entropy due to bulk entanglement. This manifestation of evanescence signals an important operational difference between classical and quantum connectivity in a semiclassical geometry. We point out some other information-theoretic tasks that are capable of distinguishing between evanescent QESs of small and large generalized entropy, and discuss their connection to the \textit{quantum volatile} \cite{EngLiu23} nature of spacetimes with parametrically large bulk entanglement.

In Sec.~\ref{sec:Ob}, we generalize our semiclassical diagnostic to detect failures of the condition of emergence \eqref{eq:innerprodOb} in the presence of a bulk observer, implemented via the observer rule of~\cite{HarUsa25}. We will discuss the case in which even an observer rule fails to result in complete emergence of the semiclassical spacetime.

In Sec.~\ref{sec:largeN}, we explain how our approach relates to the large $N$ limit of holographic theories. In particular, we provide an argument, under some additional assumptions on the bulk geometry and the large $N$ limit of holography, that emergence cannot occur in the large-$N$ limit beyond an evanescent QES. We also point out interesting potential extensions of the large $N$ approach that would need be required to generalize our proof to more cases.

Finally, in Sec.~\ref{sec:disc}, we summarize our findings and formulate a few open questions.

\section{Slice Normal Tensor Networks}
\label{sec:networks}

We begin by setting up the general framework that we will use in this paper to describe the holographic map.

\subsection{Definition}

We now describe the class of tensor networks of interest that we consider in this paper, starting with our conventions and assumptions about the tensor network structure of the holographic map.  

The structure that we will assume for the holographic map in this paper is inspired by tensor network models of holography used in studies of the \textit{Python's lunch}~\cite{BroGha19}.

In a tensor network describing a (multi)-Python's lunch~\cite{EngPen23}, there are two kinds of legs: contracted legs, which implement the holographic map and can be thought of heuristically as supporting the ``area term'' of the generalized entropy, and dangling bulk legs, which support bulk EFT degrees of freedom.  Given a splitting surface of the network --- i.e. a surface that separates it into two disconnected pieces -- we can sweep this surface towards the fundamental description along the network. If the number of bulk legs increases monotonically via this process, the holographic map is isometric. On the other hand, if the number of bulk legs decreases at some point, the holographic map implements postselections and becomes non-isometric in the corresponding portion of the bulk. See Figure~\ref{fig:shadedpython}. Note that the entire map can be an approximate isometry even if individual tensors are non-isometric; whether the map is fully non-isometric or not depends on the relative sizes of the input and output Hilbert spaces.

\begin{figure}
\centering
\includegraphics[scale=0.5]{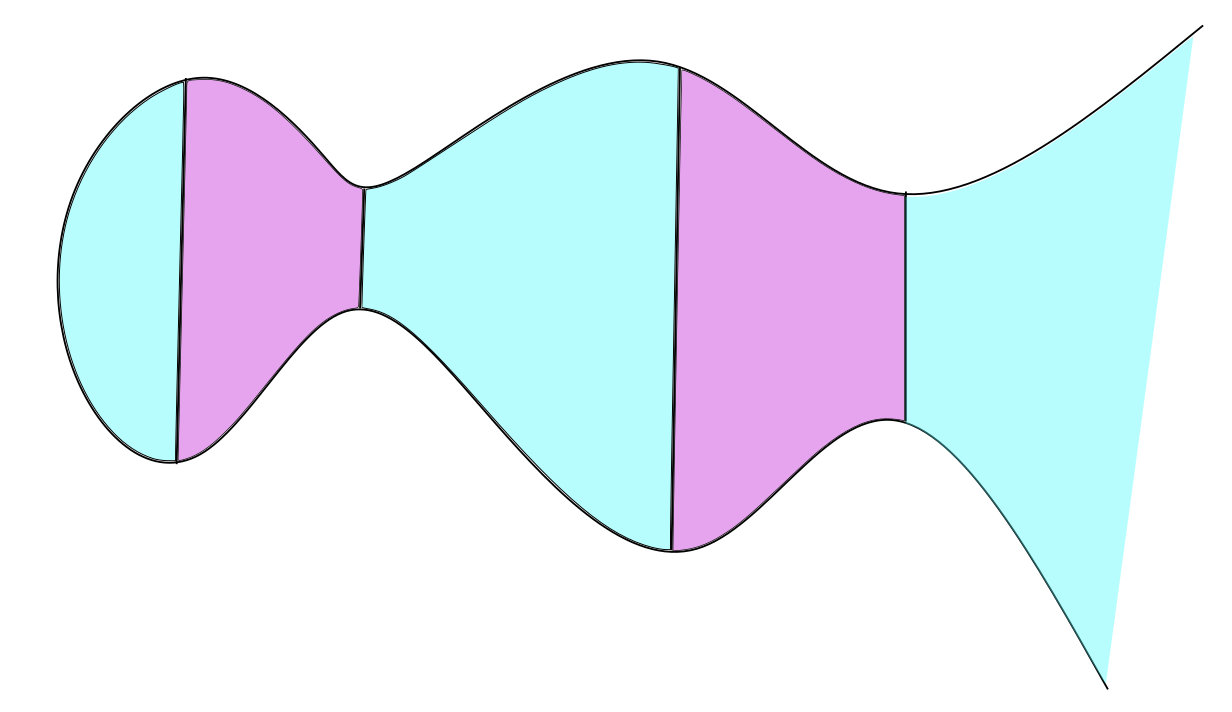}
\caption{An example of a one-sided (multi) Python's lunch geometry. In the blue regions the holographic map is isometric, in the purple regions the holographic map is non-isometric.}
\label{fig:shadedpython}
\end{figure}

In this paper, inspired by such networks, we shall model the holographic map on a time slice by a succession of isometries possibly sometimes followed by postselections. Since isometric segments of the map do not induce any fluctuations, and we are concerned with fluctuations of inner products under the holographic map in this paper, we will not write them explicitly. Rather, the only terms that we will explicitly write in our holographic maps will be a succession of non-isometric maps, modeled by random unitaries followed by partial postselections. More formally, we define the notion of \textit{slice-normal} tensor network, with which we will be concerned in this paper:

\begin{defn}
A slice-normal tensor network from a code subspace $\mathcal{H}_{code}$ to a fundamental Hilbert space $\mathcal{H}_{phys}$ is a linear map $V:\mathcal{H}_{code}\rightarrow\mathcal{H}_{phys}$ of the form\footnote{Technically, the $U_i$ also take some reference qubits as inputs.}
\begin{align}
V=\prod_{i\in I}\sqrt{d_i}\bra{0_i}(U_i\otimes Id),
\end{align}
where $I=\{0,\dots,k\}$ is a finite set with $k$ taken to be $N$-independent, the product is taken in decreasing order in $i$, the $U_i$ for $i\geq1$ are Haar-random unitaries and the output indices of each $U_i$ are partitioned between the ones going through $U_{i-1}$ and the other ones which get postselected. The last tensor $U_0$ is simply the identity with no postselection.\footnote{Note that $U_{0}$ need not actually be the identity -- it is simply that as noted above, we are dropping the isometric portions for clarity.}
\end{defn}

As we hinted at in the introduction, the contracted legs of the tensor network, which model the area term, turn out to be of paramount importance for emergence~\eqref{eq:innerprod}. With this in mind, we now introduce a quantification of these legs that we term the \textit{$\chi$-entropy}:

\begin{defn}
Let
\begin{align}
V=\prod_{i\in I}\sqrt{d_i}\bra{0_i}(U_i\otimes Id)
\end{align}
be a slice-normal tensor network. The $i^{th}$ $\chi$-entropy of the network, denoted by $S(\chi_i)$, is the number of output qubits of $U_{i+1}$ which are input into $U_{i}$.
\end{defn}

One can think of the $\chi$-entropy as the ``area" contribution to the generalized entropy between $U_{i+1}$ and $U_{i}$. At this stage we remind the reader that the notion of area term in holography is subtle, although this formalization of it will be very important to us in this paper. If we think of our code subspace as defining a UV cutoff \cite{Har16,Ges23,DonMar25}, the $\chi$-entropy should correspond to the entropy in the fundamental description that is not part of the code subspace. In simple cases, it matches with $A/4G$, with $G$ appropriately renormalized.

An example of a slice-normal tensor network is given on Figure \ref{fig:slicenormalnew}. It makes it manifest that upon ``bending" the output legs of $U_{i+1}$, we can think of the $\chi$-entropy as the entropy of a maximally entangled state between two sets of $S(\chi_i)$ qubits. We will refer to such states as \textit{$\chi$-states}. 

\begin{figure}
\centering
\includegraphics[scale=0.5]{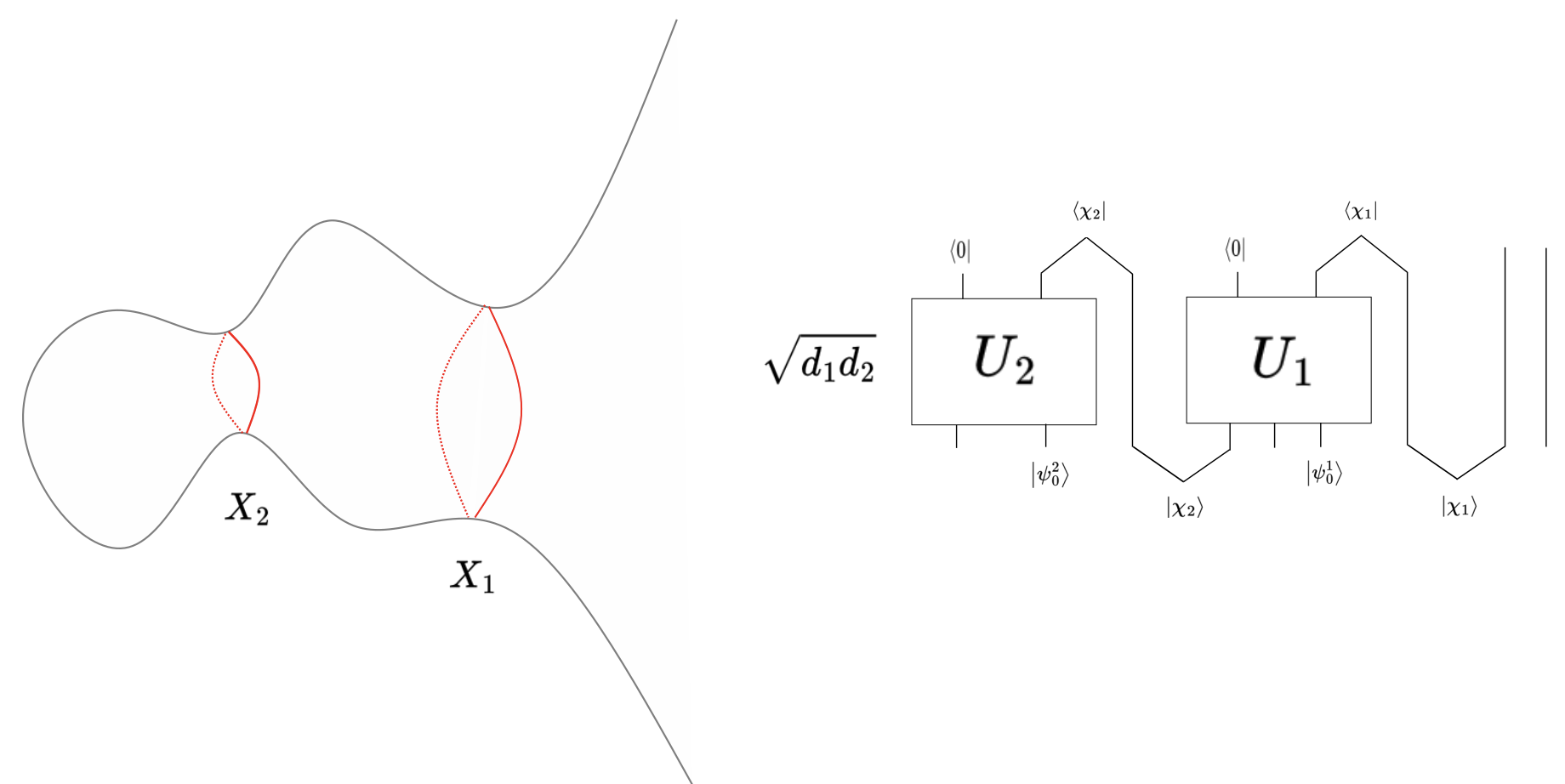}
\caption{A heuristic picture of a slice geometry and the corresponding slice-normal tensor network. The slice contains surfaces $X_1$ and $X_2$ of locally minimal area. The slice-normal tensor network associated to this geometry therefore contains two random unitaries connected by $\chi$-states $\ket{\chi_1}$ and $\ket{\chi_2}$, carrying the $\chi$-entropy, of generalized entropy equal to that of the least entangled state in the code subspace through $X_1$ and $X_2$. Note that in figures we will sometimes drop the prefactor in the tensor network for convenience.}
\label{fig:slicenormalnew}
\end{figure}

To covariantize this picture, it is necessary to associate codimension-2 surfaces in spacetime to the points at which the tensor network splits, i.e. to the $\chi$-states of the network. As it is usually done in the Python's Lunch literature, we will assume that the tensor network factorizes at QESs, which are covariantly-defined. However, it is well-known that whether or not a surface is a QES may depend on the choice of state in the code subspace. For the purpose of this paper, we will take our slice-normal tensor network to factorize at surfaces that are QESs for any state in the code subspace, and assume for now that such QESs are nested when they are mutually acausal. Perhaps more in line with the results of \cite{EngPen21b}, we could also require that the network splits at all QESs of the \textit{maximally mixed state} in the code subspace.  Of course, not all QESs are created equal: in the Python's Lunch tensor networks, the QESs at which postselection happens are local minima (``throats''~\cite{EngPen23}); we will also take our QESs to be local minima (at least in the maximally mixed state), although this is not essential. 

\subsection{Main theorem and proof}

We now turn to stating the main theorem of this paper. Before doing so, it is useful to define rigorously what we will mean by the code subspace state being ``emergent" in a slice-normal tensor network. Of course one immediate issue to contend with is that the unitaries building the slice-normal network are typical draws from an ensemble (which we take to be Haar random here to simplify the calculations). Therefore a given state in the code subspace can at most be expected to be emergent for \textit{most} choices of unitaries -- this was for example how emergence criteria were set in \cite{AkeEng22} for evaporating black holes, which asked for preservation of inner products in a code subspace of subexponential states with high probability in the choice of random unitary. This motivates the following definition of emergence:

\begin{defn}\label{defn:emergence}
Let $V:\mathcal{H}_{code}\rightarrow\mathcal{H}_{phys}$ be a slice-normal tensor network (or tensor product thereof) constructed from random unitaries $U_i$. Let $\ket{\psi}\in\mathcal{H}_{code}$. We say that $\ket{\psi}$ is \textbf{likely emergent}\footnote{Note that we use the term likely emergent rather than emergent with high probability. This is because we are in fact not characterizing emergence with high probability:~\eqref{eq:likely} is a more primitive criterion. Since in this paper we will be mostly concerned with the cases where \eqref{eq:likely} fails, we will not pursue the proof of more refined results, and refer the reader to \cite{AkeEng22} for proofs of emergence with high probability in the cases where \eqref{eq:likely} holds.} via $V$ if for all $\ket{\phi_1},\ket{\phi_2}$ of polynomial relative complexity with $\ket{\psi}$, and for all $\alpha>0$, 
\begin{align}\label{eq:likely}
\int \prod_idU_i |\bra{\phi_1}V^\dagger V\ket{\phi_2}-\braket{\phi_1|\phi_2}|^2=\mathcal{O}(N^{-\alpha}).
\end{align}
We say that $\ket{\psi}$ is \textbf{likely emergent on copies} from $V$ if for all $k\in\mathbb{N}$, for all $\ket{\phi_1},\ket{\phi_2}\in\mathcal{H}_{code}^{\otimes k}$ of polynomial relative complexity with respect to $\ket{\psi^{\otimes k}}$, and for all $\alpha>0$,
\begin{align}
\int\prod_i dU_i |\bra{\phi_1}V^{\otimes k\dagger} V^{\otimes k}\ket{\phi_2}-\braket{\phi_1|\phi_2}|^2=\mathcal{O}(N^{-\alpha}).
\end{align}
\end{defn}

We are now ready to state and prove our theorem:
\begin{thm}\label{thm:main}
Let $V:\mathcal{H}_{code}\rightarrow\mathcal{H}_{phys}$ be a slice-normal tensor network, of the form 
\begin{align}
V=\prod_{i\in I}\sqrt{d_i}\bra{0_i}(U_i\otimes Id).
\end{align}
The following statements are equivalent:\bigskip\\
$(i)$ All pure states $\ket{\psi}\in\mathcal{H}_{code}$ are likely emergent  via $V$.\bigskip\\
$(ii)$ All pure states $\ket{\psi}\in\mathcal{H}_{code}$ are likely emergent on copies via $V$.\bigskip\\
$(iii)$ There exists a pure state $\ket{\psi}\in\mathcal{H}_{code}$ that is likely emergent on copies via $V$.\bigskip\\
$(iv)$ All the $\chi$-entropies of the tensor network $V$ satisfy
\begin{align}
\log N =o(S(\chi)).
\end{align}
\end{thm}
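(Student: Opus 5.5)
I will prove the chain $(ii)\Rightarrow(i)$, $(ii)\Rightarrow(iii)$, $(iv)\Rightarrow(ii)$ and $(iii)\Rightarrow(iv)$. The first two are immediate. Taking $k=1$ in the definition of likely emergence on copies gives likely emergence. And $(iii)$ is a weakening of $(ii)$, since $\mathcal{H}_{code}$ is nonempty. The content therefore lies in two second-moment Haar computations, one giving an upper bound (for $(iv)\Rightarrow(ii)$) and one giving a lower bound (for $(iii)\Rightarrow(iv)$).

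\textbf{Upper bound, $(iv)\Rightarrow(ii)$.} Fix $k$ and states $\ket{\phi_1},\ket{\phi_2}\in\mathcal{H}_{code}^{\otimes k}$. I would expand
\begin{align}
\int\prod_i dU_i\,\big|\bra{\phi_1}V^{\otimes k\dagger}V^{\otimes k}\ket{\phi_2}-\braket{\phi_1|\phi_2}\big|^2 .
\end{align}
The computation proceeds one unitary at a time, starting from the outermost $U_k$ and moving inward. Each $U_i$ appears $2k$ times, so one uses the Weingarten formula for $\mathbb{E}[U_i^{\otimes 2k}\otimes \bar U_i^{\otimes 2k}]$. The prefactors $d_i$ are chosen exactly so that the first moment reproduces $\braket{\phi_1|\phi_2}$; the identity-permutation term therefore cancels the subtracted term.

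Every other permutation $\sigma\in S_{2k}$ contributes a cycle-counting factor. Relative to the leading term, this factor is suppressed by a power of the dimension of the legs involved, and the relevant legs are the $\chi$-legs, of dimension $2^{S(\chi_i)}$. Since $k$ and $|I|$ are fixed and $N$-independent, the number of such terms is bounded. Each is then $\mathcal{O}\!\left(\mathrm{poly}(\dim\mathcal{H}_{code}\text{-independent data})\cdot 2^{-c\,S(\chi_i)}\right)$, with the $\phi$-dependence controlled because $\ket{\phi_{1,2}}$ are normalized.

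Under $(iv)$ we have $2^{-S(\chi_i)}=N^{-\omega(1)}$, which is $\mathcal{O}(N^{-\alpha})$ for every $\alpha$. In fact this bound holds for all states, not just those of polynomial relative complexity.

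\textbf{Lower bound, $(iii)\Rightarrow(iv)$.} I argue by contrapositive. Suppose some $\chi$-entropy satisfies $S(\chi_j)\le C\log N$ along a subsequence, so $d_{\chi_j}\le N^{C}$ is only polynomially large. I then choose $k$ with $k\ge d_{\chi_j}$, or more cleverly a fixed $k$ together with a suitable swap-type test. The aim is to exhibit states $\ket{\phi_1},\ket{\phi_2}$ of polynomial relative complexity to $\ket{\psi}^{\otimes k}$ such that the variance is at least $N^{-\alpha_0}$ for some fixed $\alpha_0$.

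The natural choice is $\ket{\phi_1}=\ket{\phi_2}=\ket{\psi}^{\otimes k}$, or $\ket{\psi}^{\otimes 2}$ together with its image under a local unitary acting just beyond the throat $X_j$. The fluctuation of $\|V\ket{\psi}\|^2$ is governed by the same Weingarten sum, now evaluated on the single throat $\chi_j$. Its leading non-identity term, namely the swap of the two copies through $U_j$, is of order $2^{-S(\chi_j)}\ge N^{-C}$ with a positive coefficient. This term is not $\mathcal{O}(N^{-\alpha})$ for $\alpha>C$, which contradicts likely emergence on copies of $\ket{\psi}$.

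The key point is that the postselection at $\chi_j$ behaves like a projection onto a polynomially small subspace. Any state, after a polynomial-complexity operation localized behind $X_j$, can therefore detect this.

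\textbf{Main obstacle.} The hard step is the lower bound. One must ensure that the subleading Weingarten terms from the other unitaries $U_{i\ne j}$ neither cancel the $2^{-S(\chi_j)}$ term nor make the relevant test states exponentially complex. I would first try to use positivity: the variance of $\|V\ket{\phi}\|^2$ is a sum of terms that are each nonnegative after integrating over the $U_i$ with $i\neq j$, one by one, using $\mathbb{E}[|x-\mathbb{E}x|^2]\ge$ conditional variance. This isolates the contribution of $U_j$.

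I would then choose $\ket{\phi}$ as $\ket{\psi}^{\otimes k}$ modified by a simple, polynomial-complexity unitary on the bulk legs entering $U_j$. Such a state is automatically of polynomial relative complexity, and its norm fluctuation through $U_j$ is explicitly $\Theta(d_{\chi_j}^{-1})$.
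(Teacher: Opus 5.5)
Your upper bound $(iv)\Rightarrow(ii)$ is essentially the paper's Weingarten argument. There are two genuine gaps elsewhere.

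\textbf{Statement $(i)$ is never used as a hypothesis.} You prove $(iv)\Rightarrow(ii)\Rightarrow(iii)\Rightarrow(iv)$ and $(ii)\Rightarrow(i)$, so nothing shows that $(i)$ implies the others. One-copy emergence does not trivially give $(iii)$, so you need a separate argument, e.g.\ for $(i)\Rightarrow(iv)$. Your norm-fluctuation idea works here precisely because $(i)$ lets you \emph{choose} the state. Take $\ket{\psi_0}$ unentangled across every throat; then the purity factor is $1$ and
\[
\int dU_i\,\big|\langle\psi_0|V_i^\dagger V_i|\psi_0\rangle-1\big|^2=\frac{|P|_i-1}{|P|_i^2e^{2S(\chi_i)}-1}\left[|P|_ie^{S(\chi_i)}-1\right]\approx e^{-S(\chi_i)} .
\]
You then iterate inward, since the image remains a product state. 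This is the paper's route.

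\textbf{Your $(iii)\Rightarrow(iv)$ step fails as written.} In $(iii)$ the state is handed to you, and it may be highly entangled across $X_j$; the fully evaporated Hawking state is the motivating case. In the variance of $\|V\ket{\psi}\|^2$, the swap coefficient $\approx e^{-S(\chi_j)}$ multiplies the purity $\mathrm{tr}(\psi_{out}^2)=e^{-S_2(\psi_{out})}$, which can be exponentially small. Norm tests therefore only see $S(\chi_j)+S_2(\psi_{out})$ (cf.\ Sec.~\ref{sec:LargeEnt}). None of your proposed fixes avoids this.
\begin{itemize}
\item Copies do not help a norm test, since $\|V^{\otimes k}\ket{\psi}^{\otimes k}\|^2=\|V\ket{\psi}\|^{2k}$.
\item A single-copy unitary on the legs entering $U_j$ is absorbed by the Haar invariance of $U_j$.
\item A unitary that disentangles $\psi$ across $X_j$ need not have polynomial complexity. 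This is exactly why the paper cannot drop copies from $(iii)$.
\item Choosing $k\ge d_{\chi_j}$ is not allowed, because $k$ is fixed as $N\to\infty$.
\end{itemize}

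The missing idea is to test the \emph{off-diagonal} overlap between $\ket{\psi}^{\otimes2}$ and $S\ket{\psi}^{\otimes2}$, where $S$ swaps the two copies on one side of $X_j$, and to look at the \emph{mean} rather than the variance. Take $j$ outermost, so the outer tensors are harmless by your upper bound. Averaging over $U_j$ then gives, up to $O(1)$ factors,
\[
\mathbb{E}\,\langle\psi^{\otimes2}|V^{\dagger\otimes2}V^{\otimes2}S|\psi^{\otimes2}\rangle\approx e^{-S_2(\psi_{out})}+e^{-S(\chi_j)} .
\]
The swap coefficient now multiplies $\langle\psi^{\otimes2}|\mathrm{SWAP}_{\mathrm{full}}|\psi^{\otimes2}\rangle=1$ rather than a purity. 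The exact value of the overlap is $e^{-S_2(\psi_{out})}$, so by Jensen the mean-square error is $\gtrsim e^{-2S(\chi_j)}$. This is only polynomially small, whatever the entanglement of $\psi$.

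Relatedly, your claim that the $d_i$ make the first moment exact holds only for $k=1$. For $k\ge2$, the mean of $V^{\otimes k\dagger}V^{\otimes k}$ already contains $O(e^{-S(\chi)})$ non-identity permutation terms. That is harmless for the upper bound, but it is exactly the bias the lower bound must exploit.
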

\begin{proof}
First note that $(ii)\Rightarrow(i)$ and $(ii)\Rightarrow(iii)$ are immediate. We show $(i)\Rightarrow(iv)$, $(iii)\Rightarrow(iv)$, and $(iv)\Rightarrow(ii)$.

By definition of a slice-normal tensor network, we can write down our network as a succession of maps of the form 
\begin{align}
V_i=\sqrt{d_i}\bra{0}_{i}U_{i}\otimes Id.
\end{align}

$(i)\Rightarrow(iv)$: Let $\ket{\psi_0}$ be a state in the code subspace that is disentangled between the input qubits of all the $U_i$. We compute the fluctuation of the norm of $\ket{\psi_0}$ under the application of the first $V_i$ (note that applying this $V_i$ leaves the resulting state completely disentangled between the remaining $U_i$ inputs):

\begin{align}
\int dU_i\lvert\bra{\psi_0}V_i^\dagger V_i\ket{\psi_0}-\braket{\psi_0\vert\psi_0}\rvert^2=\frac{\lvert P\rvert_i-1}{\lvert P\rvert_i^2e^{2S(\chi_i)}-1}\left[\lvert P\rvert_ie^{S(\chi_i)}\mathrm{tr}(((\psi_0)_{i}^{out})^2)-\lvert\braket{\psi_0\vert\psi_0}\rvert^2\right],
\end{align}
where $out$ denotes the tensor factor on which the $Id$ tensor factor of $V_i$ acts, and $\lvert P\rvert_i$ is the dimension of the postselected Hilbert space. Since by definition of $\ket{\psi_0}$, $(\psi_0)_{i}^{out}$ is pure, this implies
\begin{align}
\int dU_i\lvert\bra{\psi_0}V_{i}^\dagger V_{i}\ket{\psi_0}-\braket{\psi_0\vert\psi_0}\rvert^2=\frac{\lvert P\rvert_{i}-1}{\lvert P\rvert_{i}^2e^{2S(\chi_i)}-1}\left[\lvert P\rvert_{i}e^{S(\chi_i)}-1\right]
\label{eq:fluctuation}
\end{align}
This can only be superpolynomially suppressed if $\log N=o(S(\chi_i))$. We can then apply this argument iteratively to the states $V_i\ket{\psi_0},V_{i-1}V_i\ket{\psi_0},\dots$ to obtain $(iv)$.\bigskip\\

$(iii)\Rightarrow(iv)$: We show the contrapositive. Suppose that one $\chi$-entropy $S(\chi_{i})$ does not satisfy $(iv)$. Without loss of generality we can assume that $i$ is the smallest such index. Following the method of \cite{EngGes25a}, we double our Hilbert spaces and holographic map into $V^{\otimes 2}:\mathcal{H}_{code}\otimes\mathcal{H}_{code}\rightarrow \mathcal{H}_{phys}\otimes\mathcal{H}_{phys}$. Let $S_i^{out}$ be the swap operator on the subset of qubits on the outer portion of $V_i$.
We can decompose 
\begin{align}
V^{\otimes 2}=V_{i}^{out\otimes 2} V_{i}^{in\otimes 2} ,
\end{align}
with 
\begin{align}
V_{i}^{in}:=\prod_{j>i}V_j,
\end{align}
\begin{align}
V_{i}^{out}:=\prod_{j\leq i}V_j,
\end{align}
where the product is taken from highest to lowest $j$. We now compute the inner product $\bra{\psi^{\otimes2}}V^{\dagger\otimes2} V^{\otimes2}S_i^{out}\ket{\psi^{\otimes2}}$. A series of simplifications, represented on Figure \ref{fig:swap}, is necessary to do such a calculation. First, we note that for $V_{i}^{out\otimes 2}$ all the $\chi$-states have entropy parametrically larger than $\log N$, so that $V_{i}^{out\otimes 2}$ preserves inner products on average up to superpolynomially small errors. Up to these errors we can then equivalently compute  $\bra{\psi^{\otimes2}}V^{in\dagger\otimes2}_{i} V^{in\otimes2}_{i}S_i^{out}\ket{\psi^{\otimes2}}$. We find (on average over $U_i$):

\begin{figure}
\centering
\includegraphics[scale=0.6]{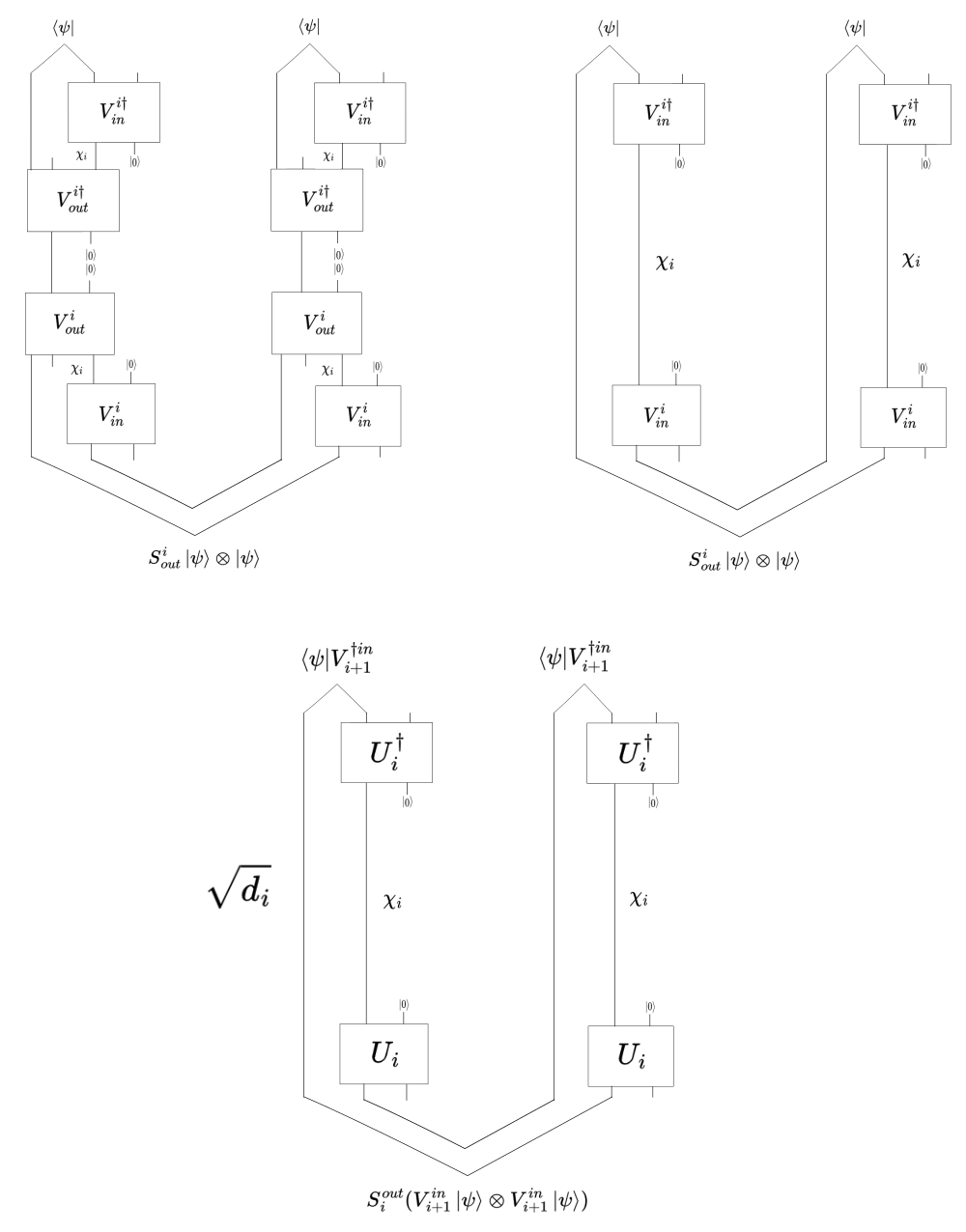}
\caption{The successive steps of the computation of the inner product $\bra{\psi^{\otimes2}}V^{\dagger\otimes2} V^{\otimes2}S_i^{out}\ket{\psi^{\otimes2}}$. On the top left, the quantum circuit represents the computation of the inner product. On the top right, the outer part of the circuit has been removed without affecting the value of the inner product up to exponentially small errors. Finally the bottom circuit explicitly outlines the part of the calculation involving the random unitary $U_i$ acting on the state $V_{in}^{i+1}\ket{\psi}$.}
\label{fig:swap}
\end{figure}

\begin{align}
\nonumber\bra{\psi^{\otimes 2}} 
V_{in}^{\dagger \otimes 2} V_{in}^{\otimes 2} S_i^{out} 
\ket{\psi^{\otimes 2}}
&=
\| V_{in}^{(i+1)\otimes 2} \ket{\psi} \|^2
\Bigg[
\max \left(
e^{-S_2(\chi_i)},
e^{-S_2\left(
\frac{V_{in}^{i+1}\psi_i V_{in}^{i+1\dagger}}
{\left\|V_{in}^{i+1}\psi_i V_{in}^{i+1\dagger}\right\|}
\right)}
\right)
 \\
&\quad
+ \mathcal{O}\!\left(
\min \left(
e^{-S_2(\chi_i)},
e^{-S_2\left(
\frac{V_{in}^{i+1}\psi_i V_{in}^{i+1\dagger}}
{\left\|V_{in}^{i+1}\psi_i V_{in}^{i+1\dagger}\right\|}
\right)}
\right)
\right)\Bigg].
\end{align}
If $e^{-S_2(\chi_i)}$ is dominant, this value is different from $\langle S_i^{out}\rangle_{\ket{\psi}\otimes\ket{\psi}}=e^{-S_2(\psi_i)}$; moreover $\ket{\chi_i}$ has entropy $\mathcal{O}(\log N)$ so the error is polynomial in $N$, which violates our definition of emergence. If $e^{-S_2(\chi_i)}$ is subleading, then the fluctuation term is of order $e^{-S_2(\chi_i)}$. In both cases, we deduce that inner products between the states $\ket{\psi^{\otimes2}}$ and $S_i^{out}\ket{\psi^{\otimes2}}$ are not preserved up to exponentially small errors.\bigskip\\

$(iv)\Rightarrow (ii)$: First, emergence on one copy follows from applying, for general code subspace states $\ket{\psi_1},\ket{\psi_2}$, the result of ~\cite{AkeEng22}:
\begin{align}
\int dU_i\lvert\bra{\psi_1}V_i^\dagger V_i\ket{\psi_2}-\braket{\psi_1\vert\psi_2}\rvert^2=\frac{\lvert P\rvert_i-1}{\lvert P\rvert_i^2e^{2S(\chi_i)}-1}\left[\lvert P\rvert_ie^{S(\chi_i)}\mathrm{tr}((\psi_1)_{i}^{out}(\psi_2)_{i}^{out})-\lvert\braket{\psi_1\vert\psi_2}\rvert^2\right].
\end{align}

If $(iv)$ is satisfied, this is indeed superpolynomially suppressed. 

The generalization to finitely many copies is straightforward.
\end{proof}

This theorem shows that there is a clear dichotomy between two cases: either all pure states of the EFT are likely emergent on copies, or none of them are. Therefore, even though the notion of emergence is \textit{a priori} defined in Def.~\ref{defn:emergence} in a state-dependent way, what we learn is that at least if we allow copies,\footnote{We did not manage to remove the assumption of copies in $(iii)$. However, if we assume that it is subexponentially complex to disentangle the code subspace state $\ket{\psi}$ through any $\chi$-state of the network we can remove this assumption: simply disentangle the state across all $\chi$-states, and compute fluctuations of the norm of this disentangled state. If we model the structure of the Rindler vacuum by a tensor product of Bell pairs, it is for example indeed subexponentially complex to disentangle the Hawking state in the evaporating black hole. We thank Daniel Harlow for discussions on this point.} either all states in the code subspace are likely emergent or none of them are. Moreover, the criterion for emergence set by this theorem is that all $\chi$-entropies of the slice-normal tensor network must be $\gg\log N$. If this is not the case, then all states of the code subspace will fail to emerge (at least on copies).

\subsection{An excision protocol}

We have now learned how to diagnose a failure of emergence of the code subspace in a slice-normal tensor network: there is a violation of emergence as soon as the network has $\chi$-states carrying entropy not much larger than $\mathcal{O}(\log N)$.

Suppose now that we have a slice-normal tensor network $V$ in which there is such a $\chi$-state. This means that all code subspace states $\ket{\psi}$ fail to satisfy the criteria of Theorem~\ref{thm:main}. Nonetheless, the image $V\ket{\psi}$ of $\ket{\psi}$ under our slice-normal tensor network is still well-defined.\footnote{In general, this image will not be normalized, even approximately. We will comment more on norm preservation in Sec.~\ref{sec:LargeEnt}.} We can then ask the question: does there exist another naturally defined code subspace $\mathcal{H}^{\prime}_{code}$ with a naturally defined code subspace vector $\ket{\psi^\prime}$ that is likely emergent from the fundamental description vector $V\ket{\psi}$ under some holographic map? 

In \cite{EngGes25b}, a proposal was made to construct such a new code subspace and state in the case of the AS$^2$ geometry: in our language, it reduces to \textit{excising} the code beyond the outermost $\chi$-state whose entropy is too small. By excise, we mean that if $\ket{\chi_i}$ is the first such $\chi$-state in our slice normal tensor network, so that the holographic map $V$ factorizes as 
\begin{align}
V=V_{out}^iV_{in}^i\ket{\psi},
\end{align}
we simply excise the code subspace beyond the $\chi$-state $\ket{\chi_i}$, so that the new code subspace is
\begin{align}
\mathcal{H}^\prime_{code}=\mathcal{H}^{out,i}_{code}
\end{align} 
and the new code subspace vector is 
\begin{align}
\ket{\psi^\prime}=V_{in}^i\ket{\psi}.
\end{align}
The $\chi$ degrees of freedom of the $\chi$-state $\ket{\chi_i}$ then become part of the support of the state $V_{in}\ket{\psi}$, whose reduced density matrix on the original EFT degrees of freedom has entanglement $\mathcal{O}(\log N)$. One way to interpret the $\chi$ degrees of freedom in $V_{in}\ket{\psi}$ in a spacetime picture would be to see them as those of a very small end-of-the-world brane that is part of the effective description and localized at the evanescent QES.\footnote{Of course, if the $\chi$-state is empty there is no such end-of-the-world brane.} See Figure \ref{fig:fixcode} for an illustration of the excision protocol.

\begin{figure}
\centering
\includegraphics[scale=0.5]{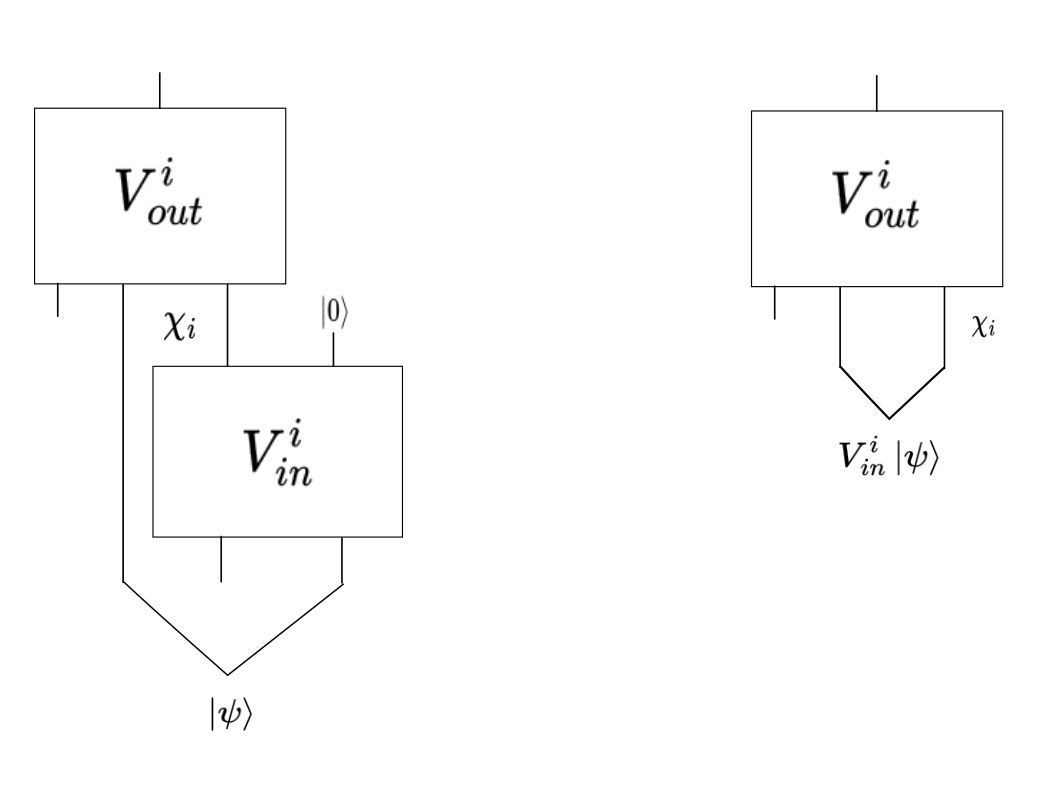}
\caption{Excising a holographic code beyond a given $\chi$-state to construct a new code subspace. Here we consider a holographic map in slice-normal form, including a $\chi$-state $\chi_i$ (left figure). We reduce the code subspace by first applying the inner portion of the holographic map $V_{in}^i$ to the original code subspace state $\ket{\psi}$ (right figure). The new code subspace vector $V_{in}^i\ket{\psi}$ has support on the outer region of $\chi_i$ only, and code subspace degrees of freedom on that outer region that are entangled with the $\chi_i$ degrees of freedom.}
\label{fig:fixcode}
\end{figure}

Once the excision protocol has been performed, we obtain a pure state on the emergent region, for which any of the equivalent statements of Theorem \ref{thm:main} holds.

\section{Evanescent Quantum Extremal Surfaces}\label{sec:evanescent}

We now turn to the spacetime interpretation of Theorem \ref{thm:main}. Recalling that the physical interpretation of the $\chi$-states of a slice-normal tensor network is that they model the area term $A/4G$ of the QES under consideration (computed with a cutoff set by the choice of code subspace), it is now easy to interpret the criterion $(iv)$ in Theorem \ref{thm:main}. Condition $(iv)$ fails exactly when there exists a QES homologous to the full fundamental description of extremely small area, so that the area contribution to the generalized entropy is $\mathcal{\log}\,N$. This motivates the following definition:

\begin{defn}
A quantum extremal surface $X$ is evanescent if it satisfies
\begin{align}
    \underset{\psi}{\mathrm{inf}}\left(\frac{A}{4G}+\,S_{bulk}(\psi)\right)=\mathcal{O}(\mathrm{log}\, N),
\end{align} 
where the infimum is taken over all states in the code subspace. 
\end{defn}

Therefore we find a potential criterion for a semiclassical spacetime to fail to emerge holographically: it must contain a surface which is a QES for some state in the code subspace, and (1) is homologous to the full fundamental description, and (2) has extremely small area. Here by ``homologous" we mean a surface $H$ that splits a time slice into two disconnected pieces, one of which is bounded by $H\sqcup B$, where $B$ is the full fundamental description. For example, in the case of the evaporating black hole, the fundamental description is the union of the asymptotic boundary and the bath of Hawking radiation. 

There is a particularly simple type of evanescent QES: the empty set, which has zero area. One can think of an evanescent QES as a QES that is ``almost" the empty set.

At this point, we stress that the definition of an evanescent QES makes reference to the \textit{infimum} of the generalized entropy in the code subspace through a QES. In particular, given a particular state in the code subspace, the generalized entropy of an evanescent QES in this state may be parametrically large. We will comment on the operational interpretation of this fact later in the paper.

\subsection{Examples}

We now give a few concrete examples of spacetimes that contain a nontrivial evanescent QES homologous to their fundamental description, and therefore fail to emerge in the sense of Theorem \ref{thm:main}.

\begin{enumerate}
\item \textbf{AS$^2$ cosmology.}

\begin{figure}
    \centering
    \includegraphics[scale=0.5]{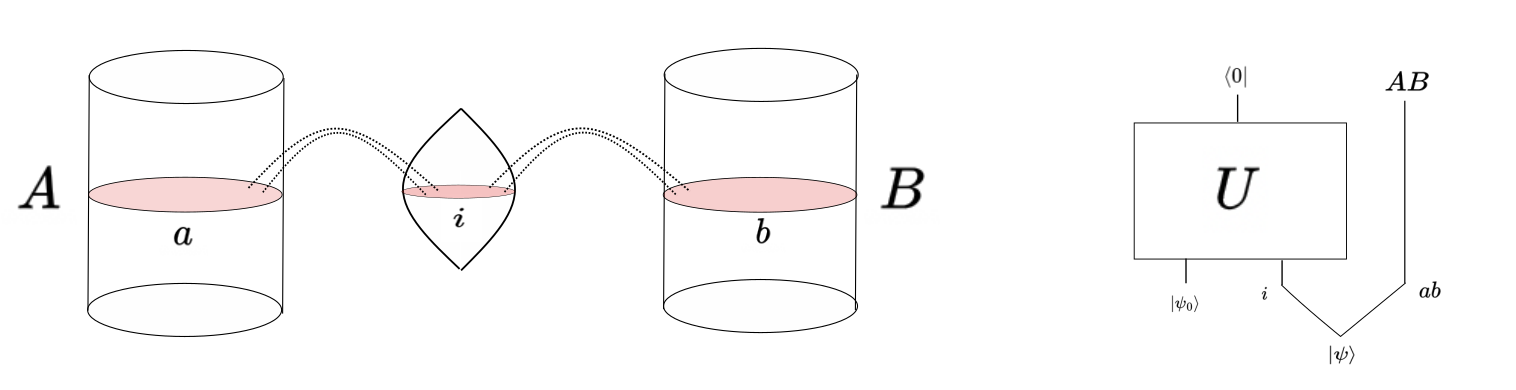}
    \caption{The slice-normal tensor network describing the $AS^2$ cosmology, in an entangled state $\ket{\psi}$ between the baby universe and the AdS regions. A random unitary $U$ acts on the baby universe $i$ and postselects it onto one state. The AdS regions $a,b$ are mapped to the boundaries $A,B$ by an isometry which we omitted here according to our convention. There is an empty evanescent QES between the AdS regions and the baby universe.}
    \label{fig:AS2}
\end{figure}

The slice-normal tensor network modeling a slice of the $AS^2$ cosmology, which features a path integral preparation of a closed AdS universe entangled with two asymptotic AdS regions, is represented on Figure \ref{fig:AS2}. Here there are two empty QESs: the union of the two empty surfaces splitting the two copies of AdS from the closed universe, and the empty surface splitting the whole geometry from the empty set. The outermost QES is the first one, which is evanescent since its $\chi$-entropy is zero. 

\item \textbf{(Almost) fully evaporated black hole.}

\begin{figure}
    \centering
    \includegraphics[scale=0.5]{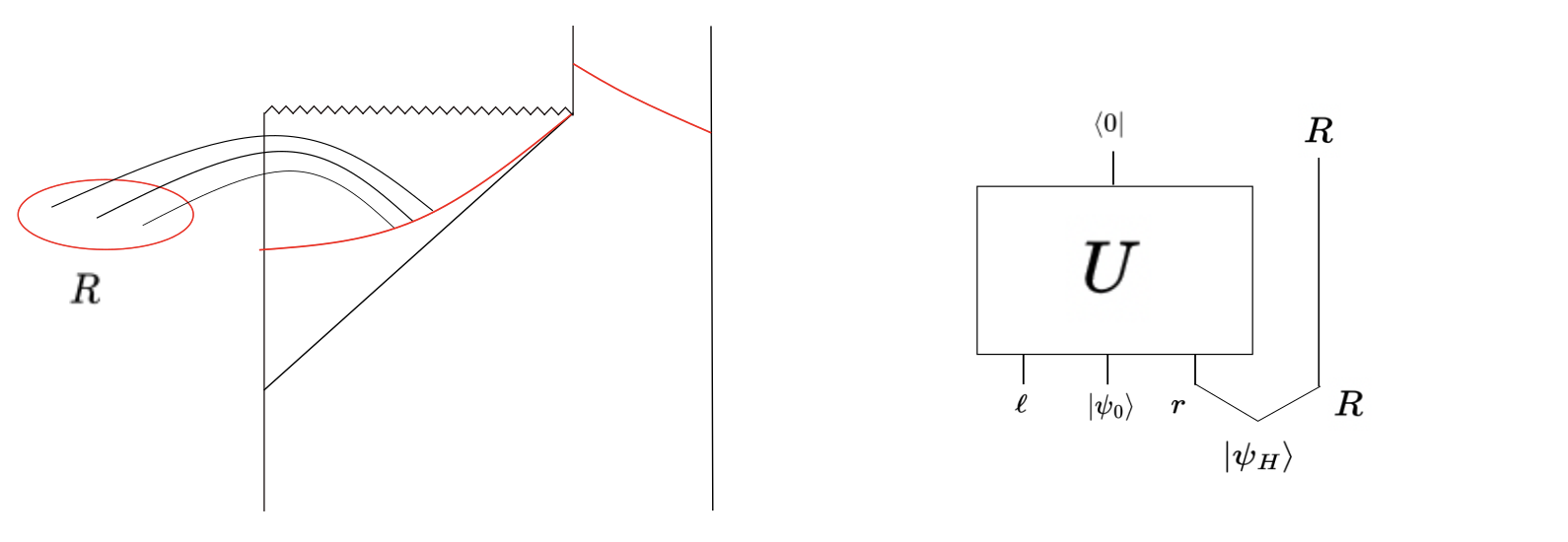}
    \caption{The slice-normal tensor network describing a fully evaporated black hole, applied to the Hawking state $\ket{\psi_H}$. Now there is an evanescent empty QES between the interior and the rest of the system, composed of the radiation and the exterior.}
    \label{fig:FullEvap}
\end{figure}

We now consider a slice of an evaporating black hole, either at full evaporation or very close to the evaporation point, so that 
\begin{align}
\frac{A}{4G}\leq\mathcal{O}(\log (1/G)).
\end{align}

The slice-normal tensor network representing a bulk slice in the fully evaporated case is presented on Figure \ref{fig:FullEvap}. The fundamental description of the (almost) fully evaporated black hole is given by the union of the radiation degrees of freedom and the black hole degrees of freedom. There is an evanescent QES homologous to the full fundamental description in this geometry: the almost empty QES sitting at the horizon (it is empty at full evaporation), and cutting through the radiation entanglement.\footnote{This is a QES at least for the maximally mixed state in the code subspace \cite{EngPen21b}. We will discuss the state-dependence of this QES later in the section.} Even though this QES has a large bulk entropy term, it has a very small area term, which means that the geometry does not satisfy any of the equivalent statements of Theorem \ref{thm:main}. 

\item \textbf{Canonical purification of the entanglement wedge of Hawking radiation after the Page time.}
There is also a way to make an evanescent QES appear by considering a partially evaporated black hole after the Page time. After the Page time, the entanglement wedge of the Hawking radiation is composed of the bath and most of the black hole interior. If one performs a canonical purification of this entanglement wedge, one obtains two baths of radiation entangled with a closed universe, obtained by gluing the interior with its CPT conjugate, see Fig.~\ref{fig:canonicalhawking}. The union of the two empty surfaces cutting through the Hawking radiation entanglement is now homologous to the fundamental description, made out of two copies of the radiation! Once again, there is a failure of emergence, and the ``closed universe" constructed out of the two copies of the interior does not emerge according to the equivalent conditions of Theorem \ref{thm:main}.

\begin{figure}
\centering
\includegraphics[scale=0.5]{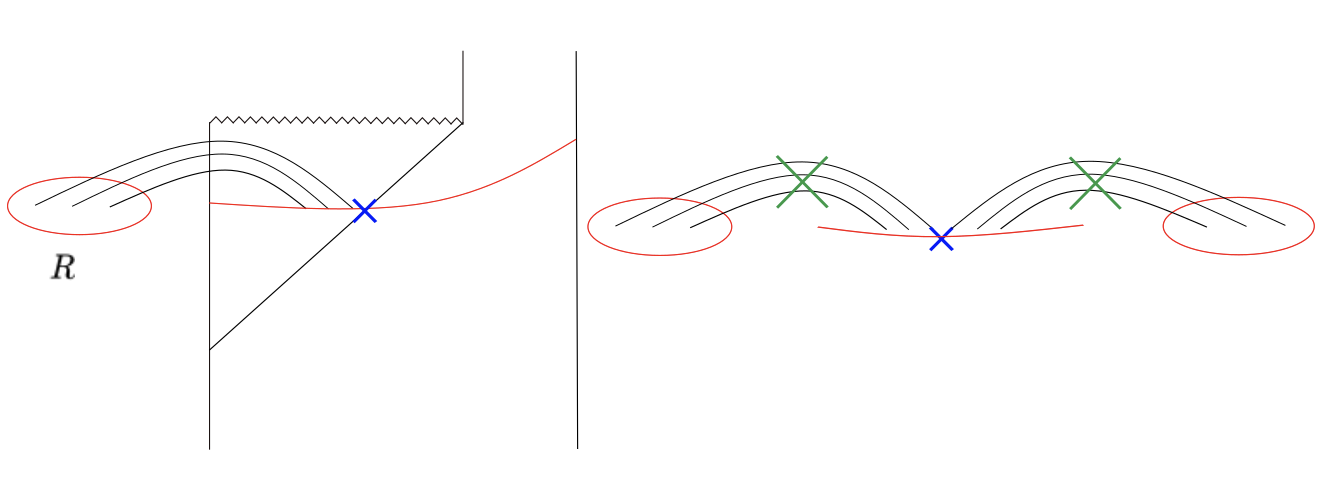}
\caption{Canonical purification of the entanglement wedge of Hawking radiation after the Page time. After the Page time the entanglement wedge of Hawking radiation extends until the blue QES. There is a non-minimal empty QES disconnecting the bath from the interior but it is not homologous to the full fundamental description (left). Once we canonically purify (right) however, the union of this QES and its CPT conjugate (green crosses) becomes a nontrivial evanescent QES homologous to the full fundamental description.}
\end{figure}
\label{fig:canonicalhawking}
\end{enumerate}

\subsection{A covariant excision protocol}

In Sec.~\ref{sec:networks}, we introduced an \textit{excision protocol} for slice-normal tensor networks: by removing all the tensors beyond the first $\chi$-state with entropy not $\gg \log N$, we constructed a new code and a new code subspace vector that satisfied the equivalent statements of emergence of Theorem \ref{thm:main}. Here we covariantize this procedure.

We saw that the spacetime counterpart of a $\chi$-state of small generalized entropy is an evanescent QES. If a spacetime contains some nontrivial evanescent QES, the natural covariantization of this protocol is therefore to excise it at the outermost evanescent QES that is homologous to the full fundamental description.

There is a well-known ambiguity in nesting QESs, and in general, it is not possible to do so. In ~\cite{EngPen21a}, it was shown that an \textit{outermost} QES always exists, however this does not necessarily imply that an outermost evanescent QES always exists: it could be hidden behind an non-evanescent one.

At least in cases where all evanescent QESs in a geometry are empty sets, this is not a problem: given an empty set QES, either it splits a time slice between the connected component of the fundamental description and the rest, or it does not. The outermost evanescent QES is then the QES that splits any time slice of spacetime between the connected component of the boundary and the rest.

In the case in which there are evanescent QESs of nonzero area, it seems reasonable to expect that the outermost evanescent QES is still well-defined, since such QESs always approach the empty set in the $G\rightarrow 0$ limit. Assuming that this is the case, the excised spacetime according to the excision protocol, on which all pure states are emergent according to the equivalent statements of Theorem \ref{thm:main}, is then given by the domain of dependence $D(\Sigma_{out}^{ev})$\footnote{In the case of a nonempty evanescent QES, there may be subtleties in defining this domain of dependence precisely, since the smooth Lorentzian geometry of spacetime may break down at such a scale. We leave a more precise investigation of this issue to future work.}, where $\Sigma_{out}^{ev}$ is any partial Cauchy slice between the full fundamental description and the outermost evanescent QES. The excised spacetimes for the AS$^2$ cosmology, the fully evaporated black hole, and the canonically purified entanglement wedge of Hawking radiation are represented on Fig.~\ref{fig:excise}.

\begin{figure}
\centering
\includegraphics[scale=0.5]{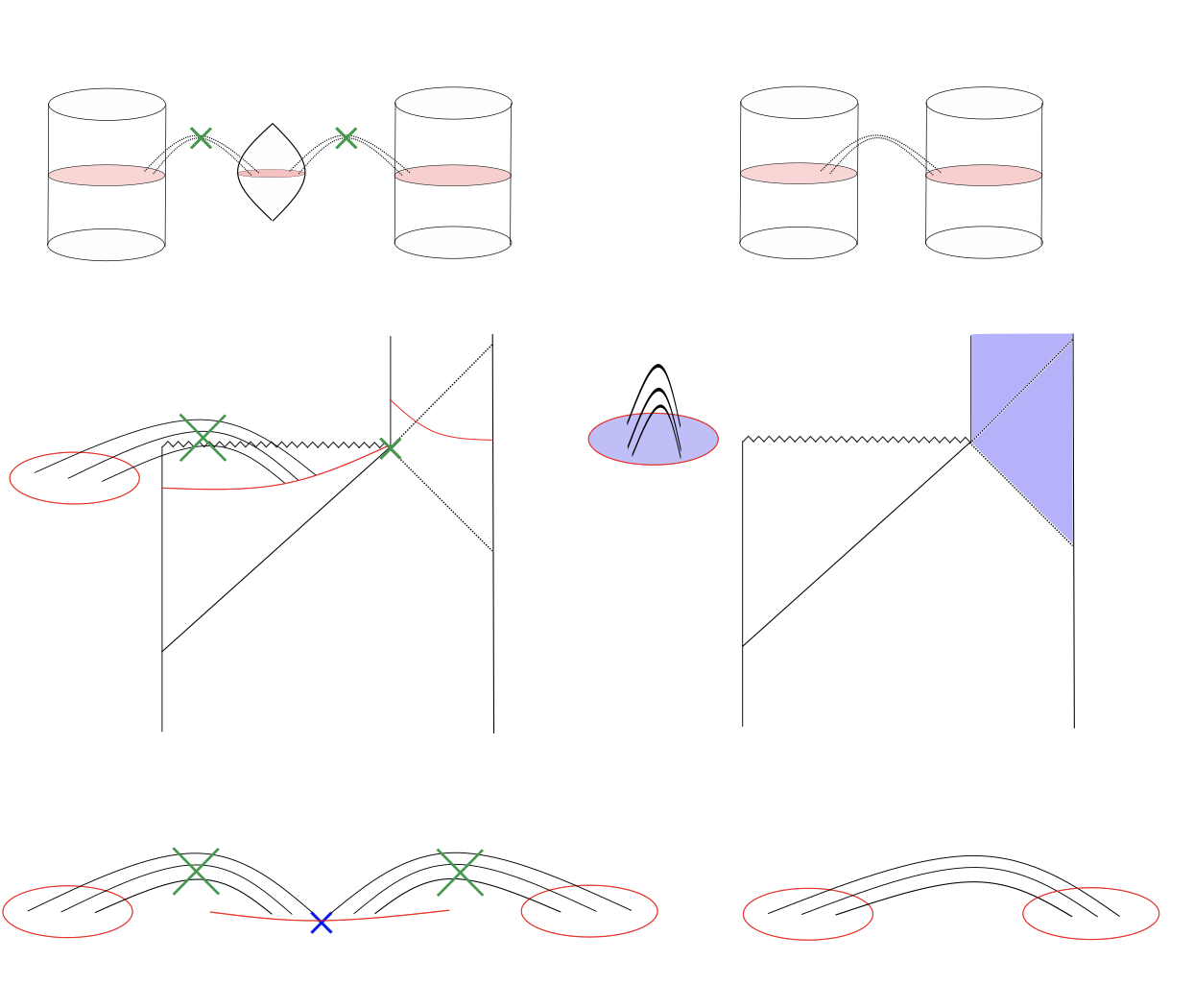}
\caption{The AS$^2$ cosmology, the fully evaporated black hole, and the canonical purification of the entanglement wedge of Hawking radiation after the Page time, before (left) and after (right) the covariant excision protocol. For the fully evaporated black hole the region that remains (the exterior and the bath) has been shaded in blue.}
\label{fig:excise}
\end{figure}

\subsection{State-independence of evanescent QESs}

Here we speculate on a property of evanescent QESs. A well-known feature of QESs is that quantum extremality depends on a choice of state in the code subspace. For example, there is a QES close to the horizon homologous to the full fundamental description in the evaporating black hole at all times, but only for the maximally mixed state in the code subspace \cite{EngPen21b}. In the Hawking state on the other hand, there is no nontrivial QES homologous to the full fundamental description. It is for this reason that, when modeling the holographic map by a slice-normal tensor network, we split the network at surfaces that are QESs for any state in the code subspace.

There may, however, be less dependence on the state for an evanescent QES. To understand this, it is useful to start with the case where such a surface is the empty set. The empty set does not admit any deformation, therefore it is quantum extremal by fiat, for all states in the code subspace. Now, consider a quantum extremal surface with $A/4G=\mathcal{O}(\log N)$. Then a small deformation of the surface satisfies
\begin{align}
\delta A\ll A=\mathcal{O}(N^{-2}\log N)\ll\varepsilon.
\end{align}
In other words, any deformation of an evanescent QES is indistinguishable from no deformation, from the point of view of the EFT cutoff. This then implies that over any such deformation, the gradient of the EFT entropy is very close to zero. Therefore, it is tempting to conjecture that if a surface is an evanescent QES in one code subspace state, quantum extremality of this surface is approximately equivalent to extremality of its area, and as a consequence, that this surface will be (at least approximately) an evanescent QES for \textit{all} states in the code subspace. This heuristic argument suggests that all code subspace states have the same set of evanescent QESs. 

\section{Implications for the ER=EPR paradigm}\label{sec:EREPR}

A striking feature of evanescent QESs, whose presence or absence controls the validity of the equivalent conditions of Theorem~\ref{thm:main}, is that their generalized entropy can be large due to a large bulk entropy term in some state while still yielding failures of spacetime emergence. On the other hand, it is often tempting to think of bulk entanglement and area as equivalent in a theory of quantum gravity -- this is the general idea behind ER=EPR. In this section, we discuss a meaningful distinction between area and bulk entanglement from the point of view of spacetime emergence, which explains why our criterion must distinguish between the two.

\subsection{Classical and quantum connectivity are operationally distinct}

There have been various approaches towards formalizing the idea that entanglement ``builds’’ spacetime. In some realizations (e.g.~\cite{Van10,JenSon14, EngLiu23}), it is the entanglement in the fundamental description that results in spacetime emergence in the effective theory, while others~\cite{MalSus13, Sus16, SusZha17, Sus17} posit that sufficient entanglement even in the effective description results in a connected dominant semiclassical saddle. One paradigm has been that the generalized entropy of QESs generically represents some fundamental quantity that corresponds to the fine-grained structure of the emergent geometry: for example, the emergence of the simple wedge for the outermost QES~\cite{EngWal17b, EngWal18, EngPen21a}, the Python’s lunch for the generalized entropy of the bulge~\cite{BroGha19}, and the entanglement wedge for the generalized entropy of the minimal QES~\cite{CzeKar12, Wal12, EngWal14, DonHar16}. 

Our results in the previous section illustrate that while the generalized entropy is typically considered as a single, unified, UV-finite quantity, its decomposition into area and bulk entropy is an essential part of the paradigm of emergent spacetime from entanglement. Initial evidence that the decomposition is important was found in~\cite{EngFol22}, which constructed a pair of disconnected AdS black holes with bulk entropy at $O(1/G)$. The construction considered the AdS black hole evaporating dynamically into a bath as in~\cite{Pen19, AEMM} shortly before the Page time. At this time the bath was decoupled from the black hole and traced out of the system, leaving a fundamental description of a density matrix $\Psi_{B}$ with von Neumann entropy of $O(1/G)$; the bulk dual is the entanglement wedge of the pre-Page time evaporating black hole: that is, the entire AdS region. The canonical purification of $\Psi_{B}$ is obtained by CPT-conjugating the dual to $\Psi_{B}$~\cite{EngWal18}; this construction was shown to yield the dominant saddle in the gravitational path integral~\cite{DutFau19}. The resulting geometry in this case is a pair of disconnected black holes entangled at $O(1/G)$, but the two spacetimes are disconnected as manifolds. Relatedly, it was shown in \cite{EngLiu23} that some algebraic criteria based on modular flow, that are more refined that the type of a von Neumann algebra, seem to be able to distinguish between classical and quantum connectivity in the bulk.

The inequivalence between classical and quantum connectivity in such examples, however, does not imply that spacetime fails to emerge: in the paragraph above the entanglement wedges of $\Psi_B$ and its canonical purification are fully emergent according to the criteria of Theorem \ref{thm:main} even though they are classically disconnected. 

Above we have identified a different, and in some sense more violent, consequence of the distinction between classical and quantum connectivity: if a spacetime contains an evanescent QES homologous to the full fundamental description, it does not emerge at all in the sense of the equivalent conditions of Theorem \ref{thm:main}, even if this evanescent QES has large generalized entropy due to bulk entanglement. Put differently, the emergence of an effective description cannot occur without enough \textit{classical} connectivity through all QESs homologous to the full fundamental description.

What is the physical reason that emergence of the effective description fails in the absence of classical connectivity? The main point here is that there is an operational difference between classical and quantum connectivity through a QES from the point of view of the effective description. The difference is that the degrees of freedom carrying the bulk entropy term of the generalized entropy of the QES are \textit{accessible} to the effective description, whereas the degrees of freedom carrying the area term are \textit{inaccessible}. In the language of error correction, the bulk entropy degrees of freedom are contained in the code subspace $\mathcal{H}_{code}$ whose emergence is under question, whereas the area degrees of freedom do not pertain to $\mathcal{H}_{code}$. This can be seen clearly in our slice-normal tensor networks: the degrees of freedom carrying the bulk entropy pertain to $\mathcal{H}_{code}$, whereas the degrees of freedom carrying the area term (or $\chi$-entropy in our language) are ``hidden" in the $\chi$-states and cannot be operated upon within the effective description.

What we learn from Theorem \ref{thm:main} is that if all of the degrees of freedom carrying the entanglement through a QES homologous to the full fundamental description are accessible within the effective description, it is easy to falsify the emergence of spacetime. For example, if one captures the Hawking radiation of a black hole at full evaporation, so that all of the generalized entropy of the empty surface separating the interior from the exterior and the radiation is carried by said radiation, the swap test on the radiation immediately shows that it is a pure state; this is an operation of subexponential complexity within effective field theory that invalidates the semiclassical Hawking picture. By contrast, at partial evaporation, even after the Page time one would need to perform exponentially many such swap tests to figure out that the radiation follows the Page curve rather than the Hawking curve. This is due to the fact that the area of the horizon is $\mathcal{O}(1)$ at partial evaporation as long as we are still far away from full evaporation: enough of the generalized entropy of the QES cutting through the horizon is hidden from effective field theory that it is exponentially complex to find a departure from the semiclassical description. 

To summarize, the purpose of classical connectivity in the emergence of spacetime is to squirrel away enough degrees of freedom of the fundamental description from the effective description to make it exponentially complex to see the breakdown of spacetime emergence within effective field theory. If (almost) all of the degrees of freedom carried by a QES homologous to the full fundamental description are accessible within effective field theory, it becomes easy to perform operations in the effective description that are inconsistent with the semiclassical picture, which therefore fails to emerge. While this manuscript was in finishing stages, the very interesting work \cite{CaoChe26}, which also examines the area term in holographic codes as entanglement inaccessible to effective field theory through the notion of non-local magic, appeared. It would be interesting to further investigate potential connections to our approach.

\subsection{What is the role of large bulk entanglement?}\label{sec:LargeEnt}

The criteria for spacetime emergence given by Theorem \ref{thm:main} are violated beyond an evanescent QES even if said QES carries a parametrically large amount of entropy in $1/G$ through bulk entanglement. Is there really no difference, then, between evanescent QESs carrying small and large bulk entanglement from the point of view of spacetime emergence? 

First, there is a necessary condition for an evanescent QES (or any QES) to carry a parametrically large amount of bulk entropy: the underlying spacetime needs to be \textit{quantum volatile} in the sense of \cite{EngLiu23}. A quantum volatile effective description is one that finely depends on the value of $G$. In order for a QES to carry a parametrically large amount of bulk entanglement above a fixed cutoff well above the Planck scale, the underlying geometry must have time slices of \textit{divergent volume} in $1/G$. This is for example the case in an evaporating black hole, in which the volume of the interior is proportional to $1/G$, so that the state of the Hawking radiation has divergent entanglement in $1/G$. 

In such quantum volatile cases, is there some meaning to the spacetime beyond an evanescent QES? One way to make progress on this is to ask what is actually meant by the ``entanglement wedge of the fundamental description" in such cases. At least two criteria have been used somewhat interchangeably to define the entanglement wedge of a boundary subregion in holography:
\begin{enumerate}
\item The \textit{minimal QES prescription}, which defines the entanglement wedge as the outer wedge of the minimal QES homologous to the fundamental description.
\item The \textit{operator reconstruction criterion}, which asks that for a ``good" class of unitary operators $W$, in the semiclassical state of interest $\ket{\psi}$, there exists an operator $\widetilde{W}$ such that
\begin{align} \label{eq:operatorrec}
\|VW\ket{\psi}-\widetilde{W}V\ket{\psi}\|=\mathcal{O}(e^{-N^2}),
\end{align}
where $V$ is the holographic map. The precise class of operators $W$ for which one should ask this is subtle. The important thing is to \textit{not} ask for the reconstruction of unitaries that strongly disrupt the entanglement structure of the code subspace state $\ket{\psi}$. For example we would not want to ask for the reconstruction of an EFT operator that completely disentangles the code subspace state between two successive tensors in our slice-normal networks, as this would change the entanglement through bulk QESs too much. To illustrate our point in our context, it will be enough to ask for \textit{product unitary} reconstruction,\footnote{See \cite{AkePen22} for a more restrictive version of this requirement.} i.e. reconstruction of operators of the form 
\begin{align}
W=\bigotimes_iW_i,
\end{align}
where the $W_i$ only have support on the EFT legs of one tensor in our slice-normal tensor network (see Figure \ref{fig:productunitary}).
\end{enumerate}

\begin{figure}
\centering
\includegraphics[scale=0.5]{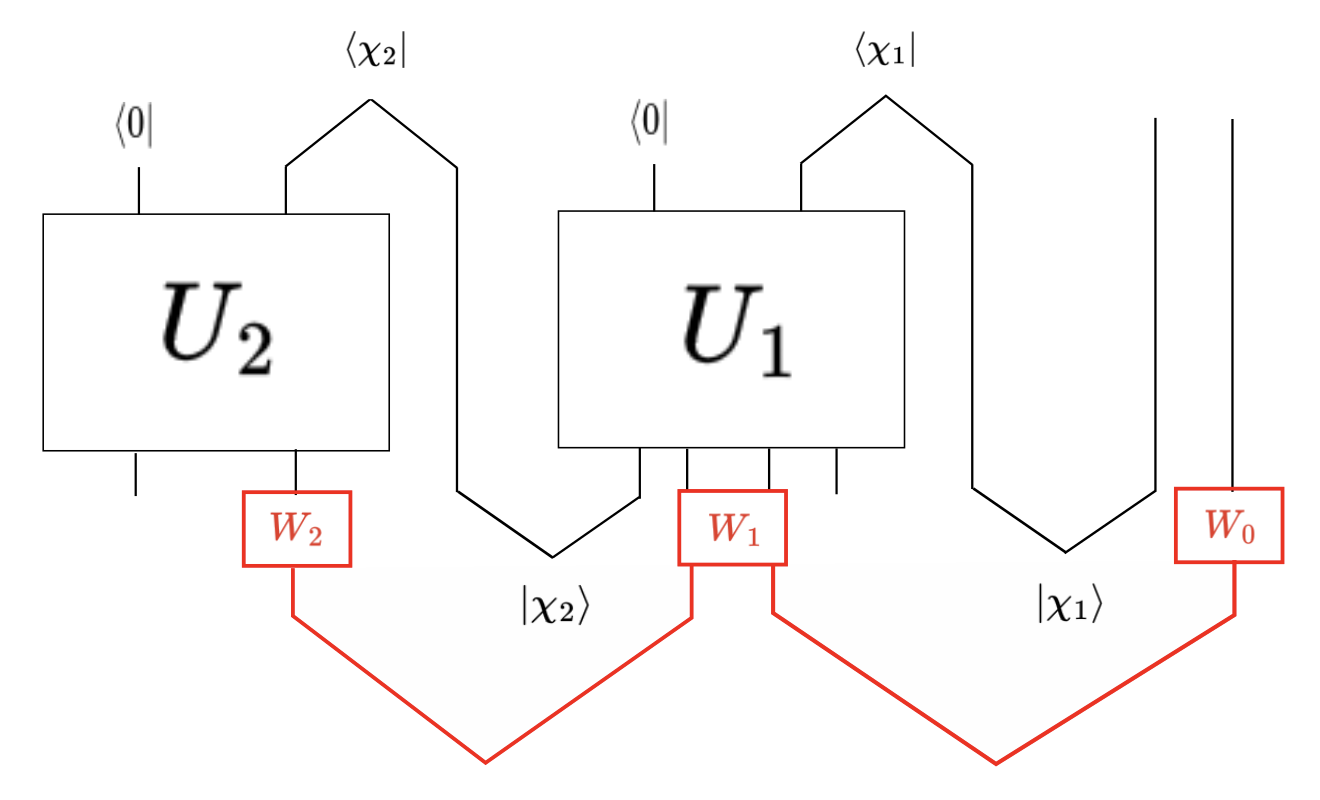}
\caption{Entanglement wedge reconstruction in a slice-normal tensor network. On the figure, the EFT state for which we want to probe entanglement wedge reconstruction is represented in red, and entangled between the different tensors. The unitaries $W$ for which we ask for reconstruction act as products on individual tensors (here $W_0\otimes W_1\otimes W_2$), in order to preserve the state-specific entanglement structure of the EFT through the $\chi$-states.}
\label{fig:productunitary}
\end{figure}

The conditions above have been shown to be equivalent in a variety of contexts, but in the presence of a non-minimal evanescent QES homologous to the fundamental description they are in fact not. In a pure state of the fundamental description, the minimal QES is by definition the trivial QES at the end of spacetime, but by the decoupling theorem 5.2 of~\cite{AkeEng22}, $W$ can be reconstructed up to $\mathcal{O}(e^{-1/G})$ if and only if \begin{align}
\bra{\psi}V^\dagger V\ket{\psi}=\bra{\psi}WV^\dagger VW\ket{\psi}+\mathcal{O}(e^{-1/G}).
\end{align}

Product unitary reconstruction in the sense above is therefore controlled by the fluctuation of the \textit{norms} of the states $\ket{\psi}$ and $W\ket{\psi}$ under the holographic map $V$. Note that this is different from the criteria of Theorem \ref{thm:main}, which also ask about \textit{overlaps} between different states.

If the unitary $W$ tensor factorizes as
\begin{align}
W=\prod_iW_i,
\end{align}
then it cannot change the generalized entropies of the EFT state $\ket{\psi}$ through the $\chi$-states of our slice-normal tensor network. Therefore, using the results of \cite{AkeEng22} to compute the fluctuations of the inner product, 
\begin{align}
|\bra{\psi}V^\dagger V\ket{\psi}-\bra{\psi}WV^\dagger VW\ket{\psi}|\leq\mathcal{O}\left(\underset{i}{\mathrm{sup}}\,e^{-S_{2}(\chi_i)-S_{2}(\psi_i^{out})}\right),
\end{align}
where the $\chi_i$ are the $\chi$-states of our slice-normal tensor network. So we deduce that the requirement for operator reconstruction is satisfied if and only if the \textit{sum} of the $\chi$-entropies and code subspace entropies is $\gg \log N$. In particular, if the geometry under consideration contains an evanescent QES of \textit{generalized entropy} $\mathcal{O}(\log N)$, then operator reconstruction -- i.e.~\eqref{eq:operatorrec} -- fails, but this is a stronger condition than evanescence of a QES. Recall that by Theorem~\ref{thm:main} the latter is equivalent to the stronger requirement of state overlap preservation,~\eqref{eq:innerprod}.

If the spacetime under consideration is not quantum volatile, then any evanescent QES has generalized entropy $\mathcal{O}(\log N)$ (because the bulk entropy term must have entropy $\mathcal{O}(N^0)$) and product unitary operator reconstruction on the boundary cannot be performed up to exponentially small errors. Therefore, operator reconstruction beyond an evanescent evanescent QES homologous to the full fundamental description fails. The correct resolution is the same as in the case of state overlaps: to perform the excision protocol described above so as to make the minimal QES of the geometry coincide with the evanescent QES, and only call the outer wedge of that evanescent QES the entanglement wedge.

In quantum volatile spacetimes in which an evanescent QES has a small area term but may have a large generalized entropy due to a large bulk entropy term in some states in the code subspace, like the Hawking state of the fully evaporated black hole, the situation is more subtle: the operator reconstruction criterion~\eqref{eq:operatorrec} above is satisfied, although the conditions of emergence of Theorem \ref{thm:main} and thus the emergence criterion~\eqref{eq:innerprod} is still violated. Then it is less clear whether one should define the entanglement wedge before or after performing the excision protocol describe above. The simplest resolution would at least naively seem to be that something like a ``state-specific'' entanglement wedge can in principle be defined to include spacetime beyond the evanescent QES; this is not emergence, but it shows that there can at least sometimes be notions of operators in the non-emergent region for a large evanescent QES. This region which \textit{is} emergent by~\eqref{eq:innerprod} is closer to a ``code subspace-specific'' entanglement wedge. The latter requires the excision protocol whenever there is an evanescent QES of any size, and a subsequent redefinition of the code subspace. At least in the particular case of the fully evaporated black hole, the interior is not emergent according to the criteria of Theorem \ref{thm:main}, but there should still be a sense in which the information about the semiclassical interior experienced by an infaller who fell in at earlier times is contained in the Hawking radiation once it all comes out -- after all, this is the statement that there is no information loss in black hole evaporation. In \cite{EngGes25b}, it was in particular shown that the large amount of bulk entanglement through the evanescent QES at full evaporation allows to reconstruct the interior in the radiation in the presence of an observer that jumped in. 

\section{Including an observer}\label{sec:Ob}

We have now argued that the conditions characterizing spacetime emergence in Theorem \ref{thm:main} hold for a semiclassical spacetime $\Sigma$ if and only if there is no evanescent QES in $\Sigma$ that is homologous to the full fundamental description.

In an effort to restore spacetime emergence in such cases, it was recently proposed to explicitly take into account the presence of an observer. By an observer, we mean the same set of requirements as~\cite{HarUsa25}: in particular, a finite number $S_{\rm Ob}$ of degrees of freedom (which in general may scale independently of $1/G$) and a pointer basis of states $\ket{a}$ in which the observer is classical and decohered from the environment. The idea of observer-dependent spacetime emergence was recently formalized in~\cite{HarUsa25} in terms of rules for modifying the holographic map. We will review this proposal below.\footnote{See also \cite{AbdSte25,AkeBue25} for a closely related rule.} 

In this section, we explain how to adapt the notions described above when an observer has been included in the bulk.

\subsection{Observer rules and emergence}

A systematic rule for modifying holographic maps to include observers was given in \cite{HarUsa25}. Under the assumption that an observer can only detect errors parametrically larger than $e^{-S_{Ob}}$, the proposal of \cite{HarUsa25} is that if $V\rho V^\dagger$ is the image of a density matrix $\rho$ under the holographic map without an observer, the modified holographic map $V_{Ob}$ in the presence of an observer $Ob$ represented by some privileged degrees of freedom of the effective description is given by 
\begin{align}
\mathcal{E}_{Ob}(\rho):=V\mathcal{C}_{Ob}(\rho)V^\dagger,
\end{align}
where $\mathcal{C}_{Ob}$ is a quantum-to-classical channel, expressed in the pointer basis $\{\ket{a}\}$ as
\begin{align}
\mathcal{C}_{Ob}(\ket{a}\bra{b}):=\delta_{ab}\ket{a}\bra{b}.
\end{align}
Mathematically (by Stinespring dilation), this is equivalent to tensoring in a ``cloned" observer Hilbert space $\mathcal{H}_{Ob^\prime}$ of equal dimension to that of $\mathcal{H}_{Ob}$, and applying the holographic map 
\begin{align}
V_{Ob}\ket{\psi_{M}}:=(V\otimes Id_{Ob^\prime})\ket{\psi_M}\otimes\ket{\omega},
\end{align}
where $\ket{\psi_M}$ is the state of the matter degrees of freedom (with Hilbert space $\mathcal{H}_M$) in the code subspace that are not part of the observer, and $\ket{\omega}$ is a GHZ-type entangled state between the observer, their environment, and $Ob^\prime$. We refer the reader to \cite{HarUsa25} for a detailed exposition of the observer rule. 

\subsection{$Ob$-evanescent quantum extremal surfaces}

In order to generalize our observer-independent discussion of emergence, we must first construct a slice-normal tensor network model in the presence of an observer. The observer rule described above amounts to removing some degrees of freedom from the code subspace, and instead making them part of the holographic map, through the fixed entanglement to the observer clone. In terms of the bulk geometry, this procedure can be thought of as excising the observer degrees of freedom out of the spacetime, and promoting them to a fixed boundary condition homologous to the clone. If the observer is localized to a point, this amounts to excising this point from the spacetime. This changes the homology constraint in the bulk geometry: now the fundamental description also contains a connected component homologous to the boundary of the excised observer. 

Therefore, to write down the appropriate generalization of a slice-normal tensor network in the presence of a bulk observer, we now split the network at QESs that are homologous to the union of the fundamental description and the excised observer. Moreover, in addition to the traditional $\chi$-states of the network there is now an observer leg on the bottom of the first unitary with postselection. An example is given in Fig.~\ref{fig:Obmodified}. 

\begin{figure}
\centering
\includegraphics[scale=0.6]{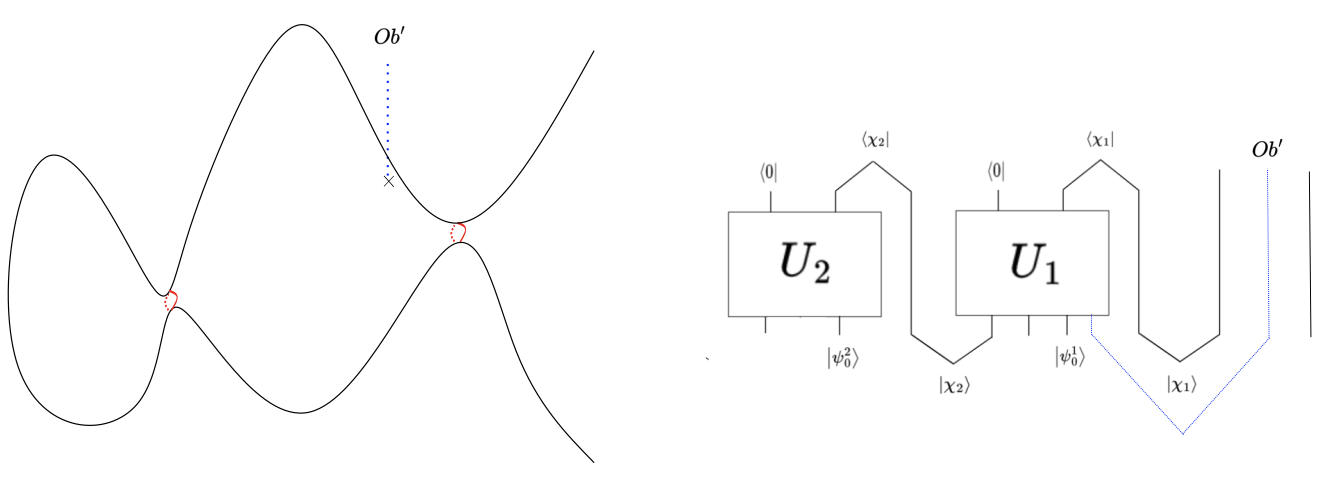}
\caption{An example of observer-modified slice-normal tensor network. Here we have two nontrivial ``QESs" homologous to the fundamental description made out of the clone $Ob^\prime$ and the asymptotic boundary: the union of the observer leg and the right red QES, and the left red QES. Note that if the left QES is evanescent, spacetime beyond this QES is not emergent according to the conditions of Theorem \ref{thm:mainOb} even in the presence of $Ob$.}
\label{fig:Obmodified}
\end{figure}

We can now formulate an exact analog of our results for observer-modified slice-normal tensor networks:

\begin{defn}
Let $V_{Ob}:\mathcal{H}_{code}^{Ob}\rightarrow\mathcal{H}_{phys}\otimes\mathcal{H}_{Ob^\prime}$ be a slice-normal tensor network constructed from random unitaries $U_i$. Let $\ket{\psi}\in\mathcal{H}_{code}^{Ob}$. We say that $\ket{\psi}$ is \textbf{likely $Ob$-emergent} from $V_{Ob}$ if for all $\ket{\phi_1},\ket{\phi_2}$ of polynomial relative complexity with $\ket{\psi}$, and for all $\alpha>0$,
\begin{align}
\int \prod_idU_i |\bra{\phi_1}V_{Ob}^\dagger V_{Ob}\ket{\phi_2}-\braket{\phi_1|\phi_2}|^2=\mathcal{O}(\mathrm{max}(N^{-\alpha},S_{Ob}^{-\alpha})).
\end{align}
We say that $\ket{\psi}$ is \textbf{likely $Ob$-emergent on copies} from $V_{Ob}$ if for all $k\in\mathbb{N}$, for all $\ket{\phi_1},\ket{\phi_2}\in\mathcal{H}_{code}^{Ob\otimes k}$ of polynomial relative complexity with respect to $\ket{\psi^{\otimes k}}$, and for all $\alpha>0$,
\begin{align}
\int\prod_i dU_i |\bra{\phi_1}V_{Ob}^{\otimes k\dagger} V_{Ob}^{\otimes k}\ket{\phi_2}-\braket{\phi_1|\phi_2}|^2=\mathcal{O}(\mathrm{max}(N^{-\alpha},S_{Ob}^{-\alpha})).
\end{align}
\end{defn}
 
The following theorem, then follows immediately from the proof of Theorem~\ref{thm:main} \textit{mutatis mutandis}:

\begin{thm}\label{thm:mainOb}
Let $V_{Ob}:\mathcal{H}_{code}^{Ob}\rightarrow\mathcal{H}_{phys}\otimes\mathcal{H}_{Ob^\prime}$ be an observer-modified slice-normal tensor network.
The following statements are equivalent:\bigskip\\
$(i)$ All pure states $\ket{\psi}\in\mathcal{H}_{code}^{Ob}$ are likely $Ob$-emergent from $V_{Ob}$.\bigskip\\
$(ii)$ All pure states $\ket{\psi}\in\mathcal{H}_{code}^{Ob}$ are likely $Ob$-emergent on copies from $V_{Ob}$.\bigskip\\
$(iii)$ There exists a pure state $\ket{\psi}\in\mathcal{H}_{code}^{Ob}$ that is likely $Ob$-emergent on copies from $V_{Ob}$.\bigskip\\
$(iv)$ All the $\chi$-entropies of the tensor network $V$ satisfy
\begin{align}
\mathrm{min}(\log N, \log S_{Ob}) =o(S(\chi)).
\end{align}
\end{thm}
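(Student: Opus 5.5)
I will follow the structure of the proof of Theorem~\ref{thm:main}, now keeping track of two competing small parameters. The implications $(ii)\Rightarrow(i)$ and $(ii)\Rightarrow(iii)$ are immediate from the definitions, so it remains to show $(i)\Rightarrow(iv)$, $(iii)\Rightarrow(iv)$ and $(iv)\Rightarrow(ii)$. I treat the observer-modified network $V_{Ob}$ as an ordinary slice-normal network $\prod_i\sqrt{d_i}\bra{0_i}(U_i\otimes Id)$. The fixed state $\ket{\omega}$ is absorbed into the input of the first postselecting unitary, and the clone $Ob^\prime$ is an extra output leg on which the map acts isometrically. The $\chi$-states are then defined at splittings homologous to the fundamental description together with the excised observer. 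In this form the Haar-averaged fluctuation formula of~\cite{AkeEng22} applies verbatim to each $V_i$. The only change is the tolerance: superpolynomial suppression in $N$ is replaced by suppression faster than every power of $\max(N^{-1},S_{Ob}^{-1})$.

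For $(i)\Rightarrow(iv)$ I take a code state $\ket{\psi_0}$ that is disentangled across the inputs of every $U_i$. Because the observer's degrees of freedom enter only through $\ket{\omega}$, such a product state still exists after excision. Equation~\eqref{eq:fluctuation} gives a norm fluctuation of order $e^{-S(\chi_i)}$ whenever $\lvert P\rvert_i\geq 2$. Requiring this to be $\mathcal{O}(\max(N^{-\alpha},S_{Ob}^{-\alpha}))$ for every $\alpha$ forces $S(\chi_i)$ to grow faster than any constant multiple of $\min(\log N,\log S_{Ob})$, which is $(iv)$. Iterating inward along $V_i\ket{\psi_0},V_{i-1}V_i\ket{\psi_0},\dots$ proceeds exactly as before.

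For $(iii)\Rightarrow(iv)$ I argue by contrapositive with the doubled swap-test computation. I pick the smallest index $i$ that violates $(iv)$ and split $V_{Ob}^{\otimes2}=V_i^{out\otimes2}V_i^{in\otimes2}$. The outer part preserves inner products up to errors $o(\max(N^{-\alpha},S_{Ob}^{-\alpha}))$, so the estimate for $\bra{\psi^{\otimes2}}V_{Ob}^{\dagger\otimes2}V_{Ob}^{\otimes2}S_i^{out}\ket{\psi^{\otimes2}}$ carries over unchanged. It gives either a shift of order $e^{-S_2(\chi_i)}$ or a fluctuation of that order. Since $S(\chi_i)=\mathcal{O}(\min(\log N,\log S_{Ob}))$ and the $\chi$-state is maximally entangled, $S_2=S$ and this error is at least polynomial in $\max(N^{-1},S_{Ob}^{-1})$, which violates $Ob$-emergence on copies.

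The converse $(iv)\Rightarrow(ii)$ uses the two-state fluctuation formula: each term is bounded by $e^{-S(\chi_i)}\,\mathrm{tr}((\psi_1)^{out}_i(\psi_2)^{out}_i)\leq e^{-S(\chi_i)}$, and $k$ is finite. Copies are handled by tensoring, since the Haar integrals factorize over $i$ up to cross terms of the same order. The step I expect to be the main obstacle is justifying the reduction in the first paragraph. It requires that the GHZ-type state $\ket{\omega}$ does not spoil the Haar-average formulas: the observer leg carries fixed, classically correlated entanglement, so $\mathrm{tr}((\psi)^{out}_i{}^2)$ can now be smaller than $1$ for every code state. I would handle this by verifying that the only effect is to add the observer entropy, at most $\log S_{Ob}$, to the effective $\chi$-entropy at the observer's cut. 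That is precisely why the threshold in $(iv)$ becomes $\min(\log N,\log S_{Ob})$.
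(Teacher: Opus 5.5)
Your proposal is essentially the paper's argument. The paper proves Theorem~\ref{thm:mainOb} only by saying it follows from the proof of Theorem~\ref{thm:main} \emph{mutatis mutandis}: rerun the same four implications with the tolerance $N^{-\alpha}$ replaced by $\max(N^{-\alpha},S_{Ob}^{-\alpha})$, and count the observer leg as part of the cut it attaches to. That is what you do.

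One correction to your final paragraph. The threshold $\min(\log N,\log S_{Ob})$ comes entirely from the relaxed tolerance, as your own $(i)\Rightarrow(iv)$ step shows; it does not come from the observer leg. The observer leg instead adds its full entropy to the $\chi$-entropy at its cut, which is up to $S_{Ob}$, not $\log S_{Ob}$. That size matters: with only $\log S_{Ob}$, an observer could never lift a small-area cut above the threshold and so could never restore emergence, for instance of a closed universe.
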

This now allows us to define an $Ob$-evanescent QES:

\begin{defn}
Given an observer with entropy $S_{Ob}$, a quantum extremal surface $X$ is Ob-evanescent if it satisfies
\begin{align}
    \underset{\psi}{\mathrm{inf}}\left(\frac{A}{4G}+\,S_{bulk}(\psi)\right)=\mathcal{O}(\mathrm{min}(\log S_{Ob},\log (1/G))),
\end{align} 
where $\psi$ runs over all states in the code subspace.
\end{defn}
Since $S_{\rm Ob}$ is not tied to $1/G$, it can be taken to be smaller -- thus some QESs that are evanescent without an observer cease to be evanescent upon inclusion of an observer with a sufficiently small $S_{Ob}$. A semiclassical spacetime is $Ob$-emergent in the sense of the characterizations of Theorem \ref{thm:mainOb}, if and only if there are no nontrivial $Ob$-evanescent QESs in the geometry that are homologous to the union of the excised observer and the fundamental description.  

We now arrive at a surprising consequence: 
\textit{including an observer may not be sufficient to make all of spacetime emerge}! If an observer $Ob$ sits beyond an evanescent QES homologous to the fundamental description, the emergent region can only be enlarged up to the first $Ob$-evanescent QES homologous to $Ob$. If this QES does not coincide with the outermost evanescent QES homologous to the asymptotic boundaries, the global spacetime would still fail to emerge, even in the presence of $Ob$, see Fig.~\ref{fig:Obmodified} for an illustration. The excision protocol works analogously in this case. Thus, while the observer ameliorates emergence of closed universes in particular and behind evanescent QESs in general, \textit{it does not restore it completely}. In some sense, this is completely consistent with~\cite{EngGes25b}'s ideas about observer complementarity. We shall further expand on this observation (pun intended) in upcoming work~\cite{EngGesTA}.

\section{Connections to the large $N$ limit}\label{sec:largeN}

Above we argued that whether a semiclassical spacetime emerges according to the equivalent criteria of Theorem \ref{thm:main} amounts to the presence or absence of an evanescent QES homologous to the full fundamental description. The tools used to arrive at this conclusion were tensor networks, based on the Python's Lunch proposal \cite{BroGha19}. Of course, tensor networks remain toy models. 

Another approach to the emergence of spacetime in AdS/CFT is the study of properties of the large $N$ limit of AdS/CFT. This approach has also been successful at demonstrating failures of spacetime emergence in some particular cases \cite{Ges25, Liu25a, KudWit25}. In this section, we clarify the relation between our results and the large $N$ limit of AdS/CFT. More precisely, we explain how to interpret tensor network models in the context of the large $N$ limit, and propose an argument within the large $N$ framework showing that a spacetime that contains an evanescent QES of $\mathcal{O}(N^0)$ generalized entropy cannot emerge in the large $N$ limit. This argument can only be applied to a subset of the examples we considered in this paper -- for example we cannot directly apply it to the case of the fully evaporated black hole, which among other issues features an evanescent QES of very small area but very large generalized entropy. At the end of the section, we formulate some open questions related to properties of the large $N$ limit, in particular in quantum volatile spacetimes, that would need to be addressed to generalize our argument, and may be of independent interest.

\subsection{Asking a tensor network whether a limit exists}

In the context of the large $N$ limit of AdS/CFT, spacetime emergence is formulated in terms of limits of correlation functions. Consider a semiclassical spacetime supporting an EFT with Hilbert space $\mathcal{H}_{EFT}$ in an overall pure state $\ket{\psi}$, and a sequence of fundamental description theories (CFTs in the case of AdS/CFT) with Hilbert spaces $\mathcal{H}_N$ indexed by $N$, in overall pure states $\ket{\Psi_N}$. We say that $\ket{\psi}$ \textit{emerges} from the sequence of fundamental states $\ket{\Psi_N}$ if for all (bounded) operators $O$ of the bulk EFT,
\begin{align}
\underset{N\rightarrow\infty}{\mathrm{lim}}\langle\Psi_N\vert O_N\vert\Psi_N\rangle=\langle\psi\vert O\vert\psi\rangle,
\label{eq:largeN}
\end{align}
where the $O_N$ are finite $N$ boundary reconstructs of the operator $O$. In general, the operators $O_N$ can be very complicated (they can have exponential complexity in $N$) and not much is understood about them. However, for example, if the EFT operator $O$ has support on the \textit{causal wedge} of the geometry, one can say more: the $O_N$s are $N$-independent polynomials in finite  $N$ single trace operators,\footnote{Technically, they really are bounded functions of these single trace operators with the appropriate one-point functions removed.} as determined by the HKLL reconstruction map \cite{HamKab05, HamKab06, HamKab06b} and the extrapolate dictionary of AdS/CFT \cite{BanDou98}.

To make the analogy between this approach and the tensor network picture manifest, we can consider the sequence of maps 
\begin{align}
V_N:&\mathcal{H}_{EFT}\longrightarrow\mathcal{H}_{N}\\&O\ket{\psi}\longmapsto O_N\ket{\Psi_N}.
\end{align}

Given how little control we have on the operators $O_N$, understanding the properties of this map precisely is difficult, however, in the case where there is no Python's lunch in the bulk it was shown in \cite{FauLi22} that this map is an approximate isometry in an appropriate sense. The expectation is that for cases with lunches, it asymptotes to a map akin to our slice-normal tensor networks, although the details are far from clear.

The main point that we want to make here is that the condition \eqref{eq:largeN} can be rewritten in terms of the map $V_N$ as
\begin{align}
\underset{N\rightarrow\infty}{\mathrm{lim}}\langle\psi\vert V_N^\dagger V_NO\vert\psi\rangle=\langle\psi\vert O\vert\psi\rangle,
\end{align}
which is a clear analog of the condition of emergence \eqref{eq:innerprod} (without the constraint that the convergence happens exponentially fast in $N$). 

This clarifies how to think of the preservation of inner products under our slice-normal tensor networks in terms of the large $N$ limit, at least in simple cases that are not quantum volatile: view a sequence of our slice-normal tensor networks $V:\mathcal{H}_{code}\rightarrow\mathcal{H}_{phys}^N$ indexed by $N$ as maps from a regulated, $N$-independent (maybe up to perturbative corrections) Hilbert space $\mathcal{H}_{code}$ to a sequence of finite $N$ CFT Hilbert spaces $\mathcal{H}_{phys}^N$. Spacetime can emerge only if these maps asymptotically preserve correlation functions, and this statement is encoded in the criterion of emergence \eqref{eq:innerprod}.

\subsection{Large $N$ and Evanescent QESs}

Now that we have explained the analogy between the large $N$ limit and the tensor network approach, we give an argument that does not rely on tensor networks that a spacetime containing a nontrivial evanescent QES of generalized entropy $\mathcal{O}(N^0)$ homologous to the full fundamental description cannot emerge in the large $N$ limit. In a sense this argument is more conservative, because it does not rely on a toy model based on the Python's Lunch proposal. On the other hand, as expected by conservation of misery, its scope is correspondingly less general: we need to assume that the generalized entropy is $\mathcal{O}(N^0)$, as opposed to the area term being $\mathcal{O}(\log N)$. This is because of subtleties related to quantum volatility and perturbation theory in the large $N$ limit, and we will speculate on how one could extend the language of the large $N$ limit to exclude spacetime emergence behind a general evanescent QES in the next subsection.

The strategy here is to extend the argument presented in \cite{Ges25}, which established a no-go theorem precluding the emergence of an entangled state between empty AdS and a closed universe in the large $N$ limit.

Let us first recall the argument of \cite{Ges25}. The setup there  considered an entangled state $\ket{\psi}$ between an AdS spacetime and a closed universe, and assumed for contradiction that $\ket{\psi}$ is emergent from a sequence of finite $N$ pure states 
\begin{align}
\ket{\Psi_N}=\sum_i c_i^{(N)}\ket{E_i^{(N)}},
\end{align}
where the $\ket{E_i^{(N)}}$ are the finite $N$ energy eigenstates of the CFT (we assume that we put the CFT on a compact space so that these eigenstates are discrete). Because the ADM mass of this spacetime is $\mathcal{O}(N^0)$,\footnote{The primary motivation of \cite{Ges25} was the case of the AS$^2$ cosmology, for which it can be argued independently that the energy is $\mathcal{O}(N^0)$ at least in idealized situations and in a suitable regime of parameters.} \cite{Ges25} used, without getting into the mathematical details, the existence of truncations of $\ket{\Psi^{(N)}}$ to a microcanonical window of size $\Delta_0\approx\mathcal{O}(N^0)$, with $\Delta_0$ arbitrarily large,
\begin{align}
\ket{\Psi^{(N)}}\approx\sum_{E_i\leq\Delta_0} c_i^{(N)}\ket{E_i^{(N)}},
\end{align}
to show that the correlators of single trace operators (which are dual to the causal wedge, here, AdS) in the $\ket{\Psi^{(N)}}$ cannot asymptote to those of a mixed state on AdS.

The reason is that due to the low energy of the $\ket{\Psi^{(N)}}$, there are only two options for the correlators of single trace operators in these states: either they do not have a large $N$ limit, or they asymptote to those of a sequence of states for which the $c_i^{(N)}$ are chosen to be $N$-independent. This alternative heavily relies on the finiteness of the support of the microcanonical window: in particular for states of $\mathcal{O}(N^2)$ energy such as black holes and long wormholes, there is a third option, made possible by the fact that the dimension of the support of the finite $N$ states diverges exponentially with $N$, that the correlators self-average over this support to a large $N$ mixed state in the causal wedge.\footnote{In \cite{Ges25}, this third possibility was framed in terms of a violation of the assumptions of the dominated convergence theorem in measure theory.}

We can now try to generalize the argument of \cite{Ges25} to the case of a spacetime containing a nontrivial evanescent QES homologous to the full fundamental description of $\mathcal{O}(N^0)$ generalized entropy. We argue that a pure entangled state $\ket{\psi}$ through the evanescent QES on such a spacetime cannot emerge from a sequence of finite $N$ pure states $\ket{\Psi^{(N)}}$. In the strict large $N$ limit the evanescent QES is empty, so that its outer wedge is disconnected from the rest of the bulk description. Without loss of generality we can consider the outermost evanescent QES (at $N=\infty$ it simply bounds the connected component of the fundamental description). Of course its outer wedge may be larger than the causal wedge, so that the sequences of operators whose large $N$ limits describe the EFT operators localized to this wedge may be more general than single trace operators. The important point is that because the generalized entropy of this evanescent QES is $\mathcal{O}(N^0)$, the modular Hamiltonian in the outer wedge has a finite expectation value:
\begin{align}
\langle h_{mod}^{out}\rangle=E_0<\infty.
\end{align}
Since this evanescent QES is the outermost one, it is minimal on a partial Cauchy slice of the geometry, so that by coarse-grained duality \cite{EngWal17b,EngWal18} the bulk modular Hamiltonian $h_{mod}^{out}$ is dual to the boundary modular Hamiltonian $H_{mod}^{(N)}$ of a coarse-grained state. Moreover the expectation value of this modular Hamiltonian is asymptotically equal to the one of its boundary dual. Assuming that the $\mathcal{O}(N^0)$ energy spectrum of this modular Hamiltonian has $\mathcal{O}(N^0)$ level spacing, and that eigenstates of this modular Hamiltonian asymptote to pure states on the outer wedge, we can apply a similar argument to the one of \cite{Ges25} and conclude that the correlators of the finite $N$ operators in the sequence of boundary pure states either do not have a large $N$ limit, or asymptote to those of a pure state on the outer wedge of a general evanescent QES.

\subsection{Possible generalizations of the large $N$ approach}

We end this section by highlighting several restrictions in the above discussion compared to our tensor network approach, and speculate on how the usual discussion of the large $N$ limit of holography might be extended to some of the more general cases we described in the above sections.

\begin{enumerate}
\item\textbf{Quantum volatile spacetimes.} Our previous description of the large $N$ limit, as well as the analogy that we drew with tensor network models, was based on the assumption that we were probing whether a given EFT description, modeled with a cutoff in the tensor network picture by a Hilbert space $\mathcal{H}_{code}$, could be the large $N$ limit of a fundamental description at finite $N$, described by a Hilbert space $\mathcal{H}_{phys}^N$. Technically we are therefore considering a sequence of codes $V_N:\mathcal{H}_{code}\rightarrow\mathcal{H}_{phys}^N$, where only $\mathcal{H}_{phys}^N$ is $N$-dependent, since the effective description is supposed to be the large $N$ description. 

However, there are cases where the requirement that the effective description literally be $N$-independent (or only $N$-dependent through perturbation theory in $1/N$ on top of a fixed background) is too restrictive. In particular, if the emergent spacetime under consideration is quantum volatile, then the effective description finely depends on the value of $N$. The main example of volatility we were concerned with in this paper was the case where the geometry of spacetime had divergent volume as a function of $N$, like in the evaporating black hole, even at partial evaporation. There are also other forms of volatility \cite{EngLiu23}, for example the background effective geometry could be $N$-independent but the effective state of that geometry (and therefore its correlators) could still finely depend on $N$. It was recently argued in \cite{Liu25b} that this is for example the case in the interior of a long wormhole geometry, and further evidence supporting this claim will be provided in \cite{BerLiu26}.

However, we saw that quantum volatility alone in the large $N$ limit in partially evaporated black holes or in long wormholes should not lead us to the conclusion that spacetime is not emergent: these geometries do not have nontrivial evanescent QESs! In such cases, to talk about the emergence of spacetime, one should instead relax the requirement that $\mathcal{H}_{code}$ be $N$-independent up to perturbative corrections. Instead, one should consider a sequence of effective descriptions $\mathcal{H}_{code}^N$, and a sequence of maps $V_N:\mathcal{H}_{code}^N\rightarrow\mathcal{H}_{phys}^N$, and study the preservation of inner products under these maps as $N$ becomes very large. If one finds that these inner products are not asymptotically preserved at large $N$, one concludes that there is a failure of spacetime emergence. Under this paradigm, in long wormholes or partially evaporated black holes, the interior is emergent but quantum volatile, whereas it is not emergent in AS$^2$ states or fully evaporated black holes.

\item\textbf{Polynomial complexity.} There is an extra subtlety that appears when the volume of a non-emergent volatile spacetime diverges in the $N\rightarrow\infty$ limit, like in the case of the evaporating
black hole. In our tensor network proof of non-emergence, we needed to use a swap operator whose complexity was polynomial in the volume of spacetime (i.e. in the number of radiation qubits in the case of the evaporating black hole), so that this complexity was polynomially divergent in the large $N$-limit. Polynomially complex operations should be included within effective field theory. However, in algebraic approaches to the large $N$ limit of AdS/CFT, one usually only considers the large $N$ limits of correlators of $\mathcal{O}(N^0)$ complexity operators (i.e. finite products of single trace operators, whose analogs in the case of the evaporating black hole would be operators acting on a fixed number of radiation qubits in the large $N$ limit). Therefore, in order to diagnose the failure of emergence of a spacetime whose volume diverges with $N$, one needs to generalize the algebraic approach to include operators whose complexity is polynomial in $N$. It would be interesting to figure out a way to do so.\footnote{We thank Jonah Kudler-Flam for related discussions.} An additional difficulty for those spacetimes whose volume diverges in the large $N$ limit is that they can violate the null energy condition~\cite{Haw76a}, rendering the precise definition of EFT on those spacetimes more difficult to precisely formulate.

\item\textbf{Perturbation theory.} In the argument of the previous subsection, it was necessary that the generalized entropy of the evanescent QES be $\mathcal{O}(N^0)$ in the large $N$ limit. In the tensor network models, the threshold for evanescence, even in the non-volatile case was of $\mathcal{O}(\log N)$. In the code picture this was because we wanted perturbation theory at all orders in $1/N$, and not just leading order, to emerge in the effective description. In order to make this point precise, one would need to understand how to formulate convergence to perturbation theory in the algebraic approach to AdS/CFT. Some steps towards this goal were undertaken in \cite{Wit21}.

\item\textbf{Observer rules.} We want to highlight that our discussion of spacetime emergence in the presence of an observer does not currently have a clear formulation in the context of the large $N$ limit. It would be very interesting to understand what the analogue of the observer rules of \cite{HarUsa25, AbdSte25} is from the point of view of the boundary theory, see \cite{Har26,Liu25a,Liu25b,KudWit25,AntSas25} for proposals based on various forms of averaging. 
\end{enumerate}

\section{Discussion}\label{sec:disc}
We have shown that in a general class of tensor network models, the extent of emergence of a gravitational effective field theory on a given smooth spacetime geometry with standard matter can be diagnosed by the existence (or absence) of an evanescent QES homologous to the full fundamental description. More precisely, whether inner products of subexponentially complex states are approximately preserved under the holographic map can be ascertained by a single question: is there a QES for a state in the code subspace, homologous to the full fundamental description, whose generalized entropy is at most ${\cal O}(\log (1/G))$ (or ${\cal O}(\log S_{\rm Ob})$ in the presence of an observer) for some state in the code subspace? Intuitively, this is asking if there is a QES with $A/4G$ parametrically small in $G$ or $S_{\rm Ob}$. The significance of the area term in a \textit{quantum} extremal surface was also discussed in~\cite{AkePen22,Ges23}. This implies the following upshots: 
\begin{enumerate}
    \item It \textit{is} possible to diagnose failures of emergence from the gravitational effective field theory using nothing but spacetime geometry and the EFT. The diagnostic further facilitates the excision of a subset of the spacetime and redefinition of the code subspace to restore emergence. 
    \item Large $S_{\rm gen}$ for a QES is insufficient on its own to allow for emergence as defined in~\eqref{eq:innerprod}. This was noted in a specific example in~\cite{EngGes25b}, and here we have quantified the general phenomenon. 
    \item Even though the inclusion of an observer can restore the emergence of some portions of semiclassical spacetime, it is not enough in general to restore the emergence of the whole geometry: parts of the spacetime may still require excision even under inclusion of an observer. Emergence of a spacetime with two nested evanescent QESs -- e.g. a Python with both lunch and dinner (and \textit{very} small area appetizers) -- cannot be restored with the addition of a single observer. It is tempting to attempt restoration via multiple observers, but as explained in~\cite{EngGes25b}, these cannot all be simultaneously observing without maintaining causal contact. Thus no single observer may be able to make both meals emerge. 
\end{enumerate}

We now comment on some questions and ideas relevant to our results. 

\paragraph{What is the CFT avatar of an evanescent QES?} Here we have identified the bulk signature of failure of emergence -- an evanscent QES --, and it is natural to ask for its manifestation in the dual CFT. In certain special cases such as the AS$^{2}$ geometry, it has been proposed that unsuppressed fluctuations in phases in the state (as a function of $N$) prevent the geometry from being well-defined at infinite-$N$ \cite{Ges25, Liu25a, KudWit25}. This resolution seems to be an interesting one for states involving an evanescent QES of $\mathcal{O}(N^0)$ generalized entropy, however in quantum volatile cases like the one of the evaporating black hole, as we discussed it is more tricky to think about the large $N$ limit. More generally it would be good to understand how to diagnose an evanescent QES on the boundary for one, fixed value of $N$. Perhaps a fine dependence on the choice of operator used to construct the boundary state can replace the fine dependence on the value of $N$. Another possible avenue would be to leverage the approach of~\cite{CaoChe26}, which appeared while this paper was in finishing stages, for computing the area of a QES using non-local magic. 

\paragraph{Dynamical Tensor Networks and Complexity of Reconstruction} In \cite{EngGes25b}, more questions were asked about evaporating black holes than the ones relative to inner product preservation and fluctuations that we addressed in this paper. In particular, the complexity of various information-theoretic tasks, such as left mover reconstruction and AMPS distillation, was studied. It was found that in order to appropriately describe the complexity of these tasks, and in particular to see information-theoretic signatures of the presence of a singularity on the fully evaporated slice, it was necessary to study the dynamical model of black hole evaporation of \cite{AkeEng22} rather than the static one. The slice-normal tensor networks we studied in this paper always only model one slice of the geometry, therefore they are static models. In order to study complexity-theoretic questions, it is necessary to promote our models to dynamical ones. More generally, we believe that taking dynamics into account more explicitly in our investigation of failures of spacetime emergence could lead to important new insights.

\paragraph{Other questions:} Recent work~\cite{Har26,Zha26,AbdAnt26} has shown that spacetime emergence in de Sitter is more subtle than ordinary well-behaved spacetimes. What are the implications of our results on evanescent QESs in de Sitter-like spacetimes?
Are violations of the cosmic censorship conjectures necessarily non-emergent by our proposal? What is the role of complementarity and causality?

\section*{Acknowledgments}
It is a pleasure to thank D. Harlow, J. Kudler-Flam, H. Liu, R. Nally, W. Taylor, and Y. Zhao for discussions, and D. Harlow and Y. Zhao for comments on an earlier draft. This work is supported in part by the Department of Energy under Early Career Award DE-SC0021886, by the Heising-Simons Foundation under grant no. 2023-4430,  by the Moore Foundation via the Black Hole Initiative. The work of NE is further supported in part by the DOE under the HEP-QIS program under grant number DE-SC0025937 and the MIT department of physics.

\bibliographystyle{jhep}
\bibliography{all}

\end{document}